\newtheorem{axiom}{Axiom}
\begin{document}
\ArticleType{RESEARCH PAPER}
\Year{2025}
\Month{January}
\Vol{68}
\No{1}
\DOI{}
\ArtNo{}
\ReceiveDate{}
\ReviseDate{}
\AcceptDate{}
\OnlineDate{}
\AuthorMark{}
\AuthorCitation{}

\title{A Uniqueness Theorem for Distributed Computation under Physical Constraints}{Uniqueness Theorem for Distributed Computation}


\author[1]{Zhiyuan REN}{zyren@xidian.edu.cn}
\author[1]{Mingxuan LU}{}
\author[1]{Wenchi CHENG}{}


\address[1]{School of Telecommunication Engineering, Xidian University, No. 2 South Taibai Road, Xi'an 710071, China}

\abstract{Foundational models of computation often abstract away physical hardware limitations. However, in extreme environments like In-Network Computing (INC), these limitations become inviolable laws, creating an acute trilemma among communication efficiency, bounded memory, and robust scalability. Prevailing distributed paradigms, while powerful in their intended domains, were not designed for this stringent regime and thus face fundamental challenges. This paper demonstrates that resolving this trilemma requires a shift in perspective, one that moves from seeking engineering trade-offs to deriving solutions from logical necessity. We establish a rigorous axiomatic system that formalizes these physical constraints and prove that for the broad class of computations admitting an idempotent merge operator, there exists a unique, optimal paradigm. Any system satisfying these axioms must converge to a single normal form: Self-Describing Parallel Flows (SDPF), a purely data-centric model where stateless executors process flows that carry their own control logic. We further prove this unique paradigm is convergent, Turing-complete, and minimal. In the same way that the CAP theorem established a boundary for what is impossible in distributed state management, our work provides a constructive dual: a uniqueness theorem that reveals what is \textit{inevitable} for distributed computation flows under physical law.}

\keywords{Distributed Computing, Theoretical Foundations, Uniqueness, Trilemma, Self-Describing Parallel Flows}

\maketitle

\section{Introduction}
Foundational models of computation, from the Turing Machine and Lambda Calculus to concurrent paradigms like the Actor Model\cite{Agha86} and Process Calculi, have provided a powerful and abstract lens through which to understand the logical limits and expressive power of computation. A common thread uniting these models is their deliberate abstraction from the physical realities of the underlying hardware; they often assume, for instance, unbounded memory, flawless communication channels, or negligible energy costs. This paper investigates a fundamental question that arises when this abstraction is removed: \textbf{What essential form must a distributed computational paradigm take when it is forced to operate under a set of inviolable physical axioms}, such as strict memory bounds and communication lower bounds, that are inherent to its execution environment? This question is no longer a purely theoretical curiosity but a pressing challenge, brought into sharp focus by the emergence of highly constrained, massively parallel computing environments such as in-network computing.

Nowhere are these physical constraints more acute than in the domain of In-Network Computing (INC). The advent of programmable network devices, such as P4 switches and Data Processing Units (DPUs), has presented an unprecedented opportunity for line-rate data processing, but this new paradigm operates under conditions that fundamentally distinguish it from traditional distributed computing. Specifically, a devices's fast memory is severely limited and physically independent of the total problem size ($W_p=O(1)$), while the network environment it inhabits is inherently unreliable. These physical laws impose a stringent, unavoidable trilemma: the concurrent optimization of communication efficiency, bounded memory, and robust scalability. Prevailing paradigms like the Bulk Synchronous Parallel (BSP) model \cite{Valiant90} or stateful message-passing, designed for more resource-rich environments, falter under these conditions, which necessitates a return to first principles to identify the viable computational models that can operate within such severe limitations.

This papers central thesis is that in such a physically-constrained environment, the key to resolving the trilemma lies not in a novel protocol, but in an intrinsic algebraic property of the computational problem itself: the existence of an idempotent, commutative, and associative merge operator. We assert that for the broad class of problems satisfying this property, the trilemmas constraints can be simultaneously met. Crucially, within the INC context, leveraging this property is not merely an optimization but a prerequisite for feasibility. To circumvent costly coordination and state recovery on memory-starved devices, the computational model must undergo a fundamental shift from being process-centric to data-centric. That is, the logic for fault tolerance and flow control must migrate from complex inter-process protocols to the structure of the data flows themselves.

Therefore, this paper transcends the proposal of another better system for INC, aiming instead to establish a fundamental law for this physically-constrained class of computations. We develop a formal model and an axiomatic system that abstracts the physical constraints of INC (communication lower bounds, heavy-tailed network perturbations, and constant memory per node)into inviolable mathematical rules. Within this system, we prove that any paradigm satisfying these axioms must converge to the unique, data-centric normal form of Self-Describing Parallel Flows (SDPF). Our contribution is a complete set of formal proofs for this paradigm, including its uniqueness (via constructive reduction), strong eventual consistency, Turing completeness, and the minimality of its design. The remainder of this paper is dedicated to the rigorous mathematical construction and proof of this central claim.

\section{Theoretical Context and Related Models of Computation}

To clearly position our theoretical contribution, we situate the SDPF theorem within the landscape of distributed computing theory, viewed through the specific lens of In-Network Computings physical constraints. We compare our work with seminal results in several key areas to elucidate its unique boundaries and core contributions.

\subsection{In-Network Computing Models and Systems}

The field of In-Network Computing has produced several powerful programming models and systems, such as  \cite{Bosshart2014P4} and various frameworks for DPUs. These systems provide essential abstractions for directly programming the network data plane. However, their primary focus is often on expressing packet processing logic and achieving line-rate performance. Issues like robust scalability, fault tolerance, and memory management under heavy traffic are typically addressed through implementation-specific, ad-hoc techniques rather than a guiding theoretical framework.

These systems, therefore, serve as a compelling manifestation of the problem we address: they are powerful engineering artifacts whose necessary design patterns and trade-offs have been discovered through practice, but without a formal theory to explain their inevitability. This paper provides that missing theoretical foundation, deriving from first principles the unique paradigm (SDPF) that must emerge under the physical laws governing these systems.\cite{Kianpisheh2023InNetworkSurvey, Hauser2023P4Survey,Bosshart2014P4}

\subsection{Communication and I/O Lower Bounds}

In the INC context, the on-chip memory of a network device and the hosts main memory form a classic two-level storage hierarchy, making communication lower bounds acutely relevant. Classic works, using models such as the red-blue pebble game and communication-avoiding algorithms, have characterized the unavoidable communication costs in such systems (e.g., \cite{HK81, ITT04, BDHS11}).

These foundational results establish the unavoidable cost of data movement. Our work does not seek to challenge these bounds; rather, we elevate this physical law to the status of a formal axiom (Axiom A1) and proceed to ask a new, composite question: \textbf{Given that this communication cost is an inviolable lower bound, what operational form must a paradigm take to also satisfy the constraints of an unreliable environment (A2) and strictly bounded memory (A3)?} By integrating these physical and operational constraints, we move from analyzing a single dimension of cost to deriving a holistic, necessary paradigm.

\subsection{Logical Convergence: CALM and Semilattices}

The challenge of achieving consistency without costly coordination has been masterfully addressed at the logical level. The CALM (Consistency as Logical Monotonicity) theorem provides the fundamental principle: monotonic programs are guaranteed to be eventually consistent without coordination \cite{Hellerstein12}. This principle finds its algebraic manifestation in semilattice structures, as elegantly formalized by Conflict-free Replicated Data Types (CRDTs). CRDTs leverage commutative, associative, and idempotent merge operators to ensure convergence regardless of network delays, reordering, or duplication \cite{Shapiro11}.

This powerful body of work forms the theoretical underpinning for our Axiom A4. It definitively answers \textit{what} algebraic properties are required for logical convergence. However, the scope of this work is confined to the logical realm; it does not prescribe an operational paradigm that can concurrently satisfy the stringent physical demands of Axiom A1 (Communication Lower Bound) and Axiom A3 (Memory Upper Bound). Our research, therefore, begins where this foundational work on logical consistency leaves off. We ask: \textbf{When the logical necessity of semilattice convergence (A4) is combined with the physical necessities of communication and memory bounds (A1, A3), what is the unique operational form that a system must take?}

\subsection{Operational Models and their Physical Infeasibility}

While the models discussed above address logical convergence, the dominant operational paradigms in distributed computing were architected with different, more forgiving physical assumptions. Paradigms like Valiants Bulk Synchronous Parallel (BSP) model, and its descendants in batch \cite{DeanGhemawat2004MapReduce, Zaharia2012RDD}  and streaming (Flink/Beam), rely on coordination mechanisms such as global barriers, checkpoints, or ordered watermarks \cite{Akidau2015DataflowModel,Lamport1978Clocks}. As formally proven in our Proposition 5.3, any form of global synchronization is fundamentally incompatible with Axiom A2 (Perturbation and Failures), as heavy-tailed latency inevitably destroys linear scalability. These paradigms are therefore physically infeasible in the INC domain.

Similarly, classic logical models for concurrency, while foundational, abstract away the very constraints we consider paramount. Kahn Process Networks (KPN) guarantee determinism through the assumption of unbounded FIFO channels, a direct violation of our Axiom A3 (Memory Upper Bound) \cite{Kahn74}. The Actor model, a powerful abstraction for composable concurrency, focuses on behavioral guarantees and does not inherently address the physical lower bounds of communication (Axiom A1) or memory (Axiom A3) \cite{Agha86}. Our work does not refute the logical elegance of these models; instead, by elevating physical laws to first-class axioms, we prove the necessary operational form a concurrent model must take when it can no longer ignore these harsh realities.

More advanced dataflow systems, such as Naiad \cite{Murray2013Naiad}  or Differential Dataflow \cite{McSherry2013DifferentialDataflow}, have replaced coarse-grained barriers with more sophisticated progress tracking mechanisms like timestamps or epochs. While this represents a significant improvement, these mechanisms still constitute a form of logical coordination that is fundamentally vulnerable within our axiomatic regime. A single straggler suffering from heavy-tailed latency (Axiom A2) can halt the progress of a timestamp or epoch, preventing the system from reclaiming resources or committing results. This reintroduces a subtle form of the convoy effect and can lead to memory pressure that violates the strict bounded memory guarantee (Axiom A3), as the system is forced to buffer ever-increasing amounts of out-of-order data. Thus, even these state-of-the-art models are incompatible with the strict physical realities we study.

\subsection{A Constructive Duality to Impossibility Theorems}

Classic impossibility results, most notably the FLP and CAP theorems, define the fundamental boundaries of what is achievable in asynchronous distributed systems facing failures \cite{FLP85, Gilbert02}. Our work offers a constructive dual: rather than proving what is impossible to achieve (e.g., consensus), we prove what is inevitable for a different class of problems. We start by accepting a set of axioms, including one that explicitly circumvents the need for consensus (Axiom A4), and derive the unique computational form that is not just possible, but necessary.

This perspective provides a formal underpinning for navigating challenges like the tail at scale phenomenon \cite{Dean-Barroso13}. In the physically-constrained environments we study, a single straggler, a direct consequence of the heavy-tailed perturbations formalized in Axiom A2, is not merely a performance issue but a threat to system survival due to bounded memory (Axiom A3). While impossibility theorems erect walls, our uniqueness theorem illuminates the single, narrow path that remains. By proving from first principles that barrier-less, data-centric computation is a condition for survival, we provide a constructive guide for building robust systems in this challenging domain.

\subsection{Summary of Theoretical Positioning} 
The preceding survey situates our work within the landscape of distributed computing theory. While foundational models provide powerful abstractions for parallelism and consistency, they were often designed under a different set of physical assumptions. The emergence of highly constrained environments like In-Network Computing presents a unique synthesis of challenges—communication, memory, and robustness—that motivates a return to first principles. 

This inquiry leads us to ask: \textbf{What set of properties must a computational paradigm possess to satisfy all these physical and logical constraints simultaneously?} To crystallize this challenge, Table~\ref{tab:model-comparison} summarizes how existing foundational models measure up against these core requirements, thereby revealing a critical gap that this paper aims to address.

\begin{table*}[!t]
\footnotesize
\caption{Comparison of Computational Models Against Core Axiomatic Requirements}
\label{tab:model-comparison}
\centering
\tabcolsep 20pt 
\begin{tabular*}{\textwidth}{@{\extracolsep{\fill}}lcccc}
\toprule
 {\bf Model / Requirement} & 
 {\begin{tabular}[c]{@{}c@{}} \bf Handles Comm. \\ \bf Bound (A1) \end{tabular}} & 
 {\begin{tabular}[c]{@{}c@{}} \bf Handles \\ \bf Perturbations (A2) \end{tabular}} & 
 {\begin{tabular}[c]{@{}c@{}} \bf Handles Memory \\ \bf Bound (A3) \end{tabular}} & 
 {\begin{tabular}[c]{@{}c@{}} \bf Is Consensus-Free \\ \bf (A4) \end{tabular}} \\
\hline
 {\bf This Work's Goal} & {\bf Yes} & {\bf Yes} & {\bf Yes} & {\bf Yes} \\
 \hline
 BSP Model~\cite{Valiant90}      & No  & No  & No  & No \\
 KPN~\cite{Kahn74}       & --  & Yes & No  & Yes \\
 Actor Model~\cite{Agha86}    & --  & Yes & --  & Yes \\
 Dataflow~\cite{Murray2013Naiad}       & No  & No  & No  & No \\
 CRDTs~\cite{Shapiro11}          & --  & Yes & --  & Yes \\
\bottomrule
\end{tabular*}
\vspace{2mm}
\par 
\centering
\parbox{\textwidth}{\small {\bf Note:} This table illustrates the core challenge. We posit that a viable paradigm in this regime must satisfy all four requirements simultaneously. Existing models, while powerful in their respective domains, are shown to be incompatible with at least one of these foundational constraints. The remainder of this paper is dedicated to formally deriving the unique paradigm that meets this objective.}
\end{table*}

\section{Formal Model and Axioms}

Having established the theoretical context and the shortcomings of existing models, we now construct the formal foundation upon which our uniqueness theorem is built. This chapter is not merely a description of a model; it is an act of formalizing the physical laws of a constrained computational regime. We abstract the realities of In-Network Computing into a precise mathematical framework, defining the execution model, the metrics for evaluating its performance against the trilemma, and, most critically, the core axioms that any viable paradigm within this regime must obey.

\subsection{Execution Model and Cost Metrics}

\textbf{Execution Model}: To enable rigorous derivation, we define a general model for distributed computation that captures the essence of the INC environment. We consider a set of processes (nodes) $P=\{1,\dots,|P|\}$, representing programmable network devices, that communicate through asynchronous, reliability-limited message channels. A complete execution trace is a (potentially infinite) interleaved sequence of message sends, receives, and local computation steps across all nodes. Each node possesses a private fast memory (e.g., on-chip SRAM) with an upper capacity of $W_p$ (its specific constraints are detailed in Axiom A3). The system also has a logically global, consistent slow storage layer (e.g., host main memory) for the initial input $I$ and final output $O$.

\textbf{Data Model}: The atomic unit of computation is a tile or an atomic increment $\delta$ generated during computation. To achieve coordination-free asynchronous processing, each such data unit must carry a minimal set of self-describing metadata:
$$m=(\text{id}, \text{target}, \text{op}, \text{pc/next}, \text{rid})$$
Here, id is the unique identifier of the tile; target indicates the destination to which the increment should be merged; op and pc/next define the stateless computation to be performed and the subsequent control flow; and rid (replay id) is a globally unique identifier for the same semantic contribution, used to achieve idempotency.

\textbf{Cost Metrics}: Based on this model, we define three key cost metrics corresponding to the three dimensions of the trilemma, which serve as the quantitative basis for our analysis:
\begin{enumerate}
    \item \textbf{I/O Amplification Factor ($R_A$)}: $R_A := \frac{Q}{|I|+|O|}$, where $Q$ is the total number of bytes communicated across nodes during execution. This metric measures communication efficiency by representing the factor by which the actual communication ($Q$) is amplified over the necessary I/O for the initial input ($I$) and final output ($O$).
    \item \textbf{Peak Memory Footprint ($W_{\max}$)}: $W_{\max} := \max_{p\in P}\ \max_{t}\ \text{(fast memory usage of node } p \text{ at time } t)$. This metric measures memory boundedness.
    \item \textbf{Scalability ($S(P)$)}: Defined as the functional relationship between the systems steady-state throughput and the number of nodes $|P|$ for a given workload. Ideal linear scalability implies $S(P) = \Theta(P)$.
\end{enumerate}

\subsection{Axiomatic System}

We now propose four axioms that formalize the computational regime. These axioms are not arbitrary; they are derived from and classified by their origin. They form the logical cornerstone of our uniqueness proof. Axioms A1 and A3 formalize the inviolable \textit{physical resource constraints} (communication and memory). Axiom A2 formalizes the \textit{environmental constraints} of a realistic distributed system (perturbations and failures). Finally, Axiom A4 defines the \textit{algebraic properties} of the class of computational problems we address. Our central thesis rests on proving the unique paradigm that emerges from the intersection of all four constraints.

\begin{axiom}[Communication Lower Bound]
\label{ax:comm}
In a two-level memory model, for any computational problem requiring a non-trivial amount of data reuse, I/O complexity theory establishes a fundamental trade-off between local memory size and the necessary cross-node communication volume $Q$. Any correct algorithm is subject to a theoretical lower bound on this volume.
\begin{quote}
\textbf{Formalization}: The total communication volume is bounded by $Q = \Omega(f(n, W_p))$, where $f$ is a function of the problem size $n$ and the local fast memory size $W_p$. For many important classes of algorithms, such as those with a dependency graph structure akin to the red-blue pebble game, this bound is polynomial in $n$ and inversely related to some power of $W_p$. For communication-optimal algorithms, $Q$ must be at least $\Omega(|I|+|O|)$.
\end{quote}
\end{axiom}

\begin{axiom}[Perturbation and Failures]
\label{ax:perturb}
The environment of distributed computation is inherently unreliable. Task latencies follow a heavy-tailed distribution, and nodes may fail with a nonzero probability. The network permits message reordering, loss, and duplication. We assume a standard fair-lossy channel model which guarantees that if a message is re-sent infinitely often, it will eventually be delivered. This eventual delivery (fairness) assumption is a critical boundary condition for our convergence proofs.
\begin{quote}
\textbf{Formalization}: (i) The service time $\tau$ of each atomic task is an i.i.d. random variable whose distribution is heavy-tailed (i.e., its tail decays polynomially or slower, $\Pr[\tau > t] \sim t^{-\alpha}$ for some $\alpha > 0$). (ii) Each node fails with a constant probability $p_f > 0$.
\end{quote}
\end{axiom}

\begin{axiom}[Memory Upper Bound]
\label{ax:memory}
The local fast memory capacity of a single node is independent of the global size of the problem.
\begin{quote}
\textbf{Formalization}: $W_p=O(1)$, where the asymptotic bound is relative to the global problem size $n$ (it is only on the order of the tile or increment of granularity).
\end{quote}
\end{axiom}

\begin{axiom}[Idempotent and Commutative Merge]
\label{ax:merge}
This axiom defines the algebraic property of the problem domain for which our uniqueness theorem holds. We focus on the broad and significant class of computations whose partial results can be correctly combined via a merge operator that is commutative, associative, and idempotent. This class includes many large-scale distributed tasks, such as telemetry aggregation (e.g., finding a maximum value), statistical estimations (e.g., building a histogram), and certain coordination-free machine learning updates. It explicitly excludes tasks requiring strict, linearizable state transitions, such as transactional banking systems.
\begin{quote}
\textbf{Formalization}: For each target, its state space forms an algebraic structure $(\mathcal{S},\oplus)$, where the merge operator $\oplus:\mathcal{S}\times\mathcal{S}\to\mathcal{S}$ is commutative, associative, and idempotent for the same rid. The global state space of the entire system is a Cartesian product semilattice $\prod_{t}(\mathcal{S}_t,\oplus)$.
\end{quote}
\end{axiom}

\vspace{1em}
\textit{Note: A1–A4 will be used as axioms. This paper does not aim to re-prove these classic communication/I/O lower bound theories or semilattice convergence theories, but rather to use them as the starting point for our deductive reasoning.}

\section{The SDPF Paradigm}

\begin{figure}[!t]
\centering
\begin{minipage}[c]{0.48\textwidth}
\centering
\includegraphics[width=\textwidth]{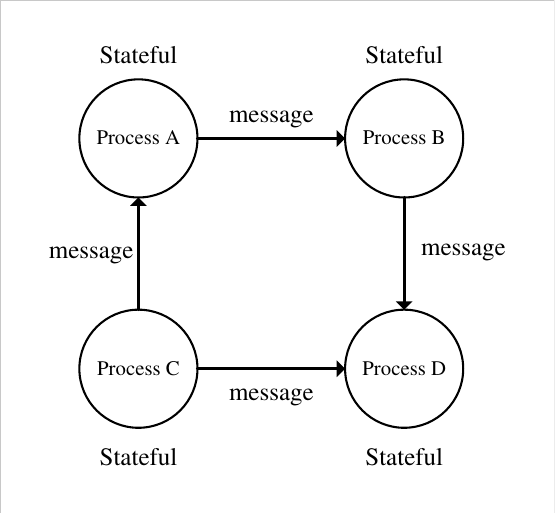}
\centerline{(a)}
\end{minipage}
\hspace{0.02\textwidth}
\begin{minipage}[c]{0.48\textwidth}
\centering
\includegraphics[width=\textwidth]{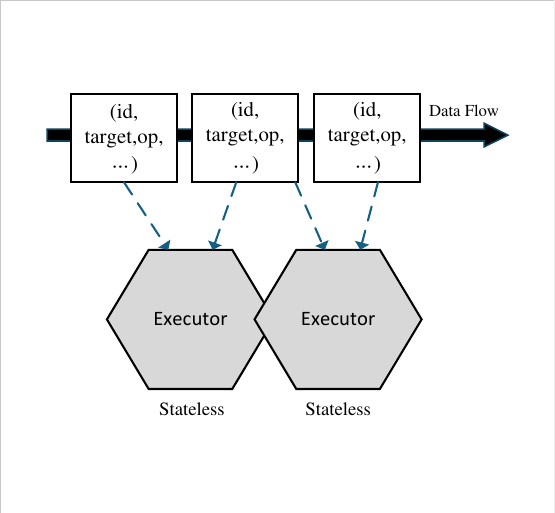}
\centerline{(b)}
\end{minipage}
\caption{A conceptual comparison between traditional process-centric models and the data-centric SDPF paradigm. (a) Process-Centric Model; (b) The SDPF Data-Centric Model. In SDPF, intelligence and control logic are embedded in the data flow, while executors are stateless and anonymous.}
\label{fig:paradigm-comparison}
\end{figure}

The axiomatic system established in Chapter 3 defines a rigorous and challenging computational regime. Any paradigm that operates within it is not a matter of arbitrary design, but of logical necessity. In this chapter, we derive the unique structure that emerges as the inevitable consequence of these axioms: the Self-Describing Parallel Flows (SDPF) paradigm.

Its core is a fundamental and forced shift from being process-centric to data-centric. We will demonstrate that in order to satisfy the axiomatic constraints, the responsibility for state management and control flow must be externalized from the processing nodes, rendering them stateless and anonymous execution engines. While the computational logic (the op) remains at the node, the intelligence that guides the stateful, end-to-end progression of the computation is embedded into the data flows themselves. This role reversal is not a design choice but the only viable mechanism for survival. The five properties (S1-S5) defined below are, therefore, not a menu of features, but the necessary and interlocking components of this unique normal form.

\begin{definition}[SDPF Normal Form]
An execution is said to be SDPF if and only if it simultaneously satisfies the following five properties (S1-S5):
\end{definition}

\subsection{Single-Read Reuse (S1)}
This property directly addresses the requirement of Axiom \ref{ax:comm} (Communication Lower Bound). To minimize communication overhead, the process of bringing data into the system from the expensive slow storage layer must be free of redundancy. Subsequent data sharing is accomplished through cheaper in-network copying or multicasting.

\begin{quote}
    (S1) Single-Read Reuse: Each input tile is introduced into the system from slow memory at most once. Subsequent acquisition by multiple consumers is achieved via network-level replication/reuse.
\end{quote}

\subsection{Stateless Micro-tasks (S2)}
This property is a crucial step in dealing with Axiom \ref{ax:perturb} (Perturbation and Failures). By decomposing computation into pure-functional micro-tasks, the system eliminates the difficult problem of state recovery after node failures. This allows any task to be safely replayed on any node, laying the foundation for high fault tolerance.

\begin{quote}
    (S2) Stateless Micro-tasks: Each computation is a pure function application on its carried data, without reading or modifying any cross-task, mutable shared state.
\end{quote}

\subsection{Idempotent Merge (S3)}
This property works in close concert with S2 to form the core of robustness. It ensures that in the reordered and duplicated network environment described by Axiom \ref{ax:perturb}, the correctness of the result is unaffected. This is achieved by relying on the algebraic structure defined in Axiom \ref{ax:merge}.

\begin{quote}
    (S3) Idempotent Merge: Each increment $\delta$ produced by a computation carries a unique replay identifier rid. The target state is updated via a merge operator $\oplus$ that is commutative and associative; merging duplicate increments with the same rid does not change the target state.
\end{quote}

\subsection{Barrier-less Asynchronous Scheduling (S4)}
This property is designed to combat the heavy-tailed latency defined in Axiom \ref{ax:perturb}. By eliminating global synchronization points, the system avoids the convoy effect, where overall performance is dictated by the slowest task. This makes it possible for the systems scalability $S(P)$ to approach ideal linear growth.

\begin{quote}
    (S4) Barrier-less Asynchronous Scheduling: The execution process contains no global barriers that require all or a large number of nodes to be ready before proceeding. Any ready task whose dependencies have been met can be scheduled for immediate execution.
\end{quote}

\subsection{Sliding Window (S5)}
This property is a direct satisfaction of Axiom \ref{ax:memory} (Memory Upper Bound). By limiting the amount of active data that a node needs to buffer at any given time, it ensures that a single nodes memory footprint is decoupled from the global problem size $n$, thereby enabling true horizontal scalability.

\begin{quote}
    (S5) Sliding Window: Each node only buffers a finite window of active data that is either awaiting future computation (e.g., waiting for a pair) or being accumulated. This ensures the peak memory footprint $W_{\max}$ is a constant independent of $n$, i.e., $W_{\max}=O(1)$.
\end{quote}

These five properties collectively define the complete form of the SDPF paradigm. They are not isolated but are interdependent, working together to satisfy the stringent axioms we established in Chapter 3. In the following chapters, we will demonstrate through a series of rigorous mathematical proofs why these five properties are not only sufficient but also necessary.

\section{Proof of Necessity and Uniqueness}

The preceding chapters have established a formal axiomatic system and precisely defined the SDPF paradigm. We now arrive at the central argument of this paper: to prove that SDPF is not merely a well designed option, but the unique and necessary outcome of our axiomatic constraints. Our proof strategy is a constructive one. We will demonstrate through a series of necessity propositions that any deviation from the core properties of SDPF inevitably leads to a violation of one or more axioms. This method reveals that the SDPF normal form is not so much designed as it is discovered, an inescapable consequence of the physical laws governing the computational regime. To clarify the overall proof structure, we first provide the following roadmap.

\begin{table}[!t]
\footnotesize
\caption{Proof Roadmap for SDPF Uniqueness}
\label{tab:roadmap}
\tabcolsep 6pt 
\begin{tabular*}{\textwidth}{@{\extracolsep{\fill}}lllll}
\toprule
\textbf{Trilemma Goal} & \textbf{Required SDPF Property} & \textbf{If Absent...} & \textbf{...Axiom Violated} & \textbf{Proposition} \\ \hline
Communication Efficiency & Single-Read Reuse & I/O is amplified & A1 (Comm. Lower Bound) & Prop. 5.1 \\
Robustness & Stateless/Idempotent & Fault tolerance is costly & A1 / A2 & Prop. 5.2 \\
Scalability & Barrier-less Async. & Tail latency is amplified & A2 (Perturbation Model) & Prop. 5.3 \\
Bounded Memory & Sliding Window & Memory is coupled with scale & A3 (Memory Upper Bound) & Prop. 5.4 \\
\bottomrule
\end{tabular*}
\end{table}

Next, we will prove these four propositions one by one.

\subsection{Necessary Condition for Communication Efficiency}

\begin{proposition}[Necessity of Single-Read Reuse]
\label{prop:single-read}
Let the set of tiles be $\{\text{tile}_k\}$ with consumption counts $U_k$. If an algorithm allows some consumers to independently fetch the same tile from the source (without in-network reuse), then
$$R_A \;\ge\; 1 + \frac{\sum_k (U_k-1)\cdot|\text{tile}_k|}{|I|+|O|}.$$
In particular, when some $U_k=\Omega(n)$, then $R_A=\Omega(n)$.
\end{proposition}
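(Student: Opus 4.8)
The plan is to prove the bound by a direct byte-counting (accounting) argument on the total cross-node communication volume $Q$, partitioning it into the bytes spent moving input tiles into the compute fabric and the bytes spent emitting output, and then invoking the identity $|I| = \sum_k |\text{tile}_k|$ that the essential input is exactly the disjoint union of its tiles.

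First I would set up the accounting. Since every communication event contributes non-negatively to $Q$, it suffices to lower-bound $Q$ by the traffic attributable to two unavoidable activities: (a) delivering each tile to each of its consumers, and (b) writing the final result. For (b), emitting the output requires at least $|O|$ bytes to cross into slow storage, so the output contribution is $\ge |O|$. For (a) the hypothesis bites: because the algorithm lets each consumer independently fetch from the source rather than reusing an in-network copy, each of the $U_k$ consumptions of $\text{tile}_k$ triggers a \emph{separate} transfer of $|\text{tile}_k|$ bytes across the memory boundary, so the input-delivery traffic is at least $\sum_k U_k |\text{tile}_k|$.

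Next I would combine these. Writing $Q \ge \sum_k U_k|\text{tile}_k| + |O|$ and using $|I| = \sum_k |\text{tile}_k|$, I would split $\sum_k U_k|\text{tile}_k| = |I| + \sum_k (U_k - 1)|\text{tile}_k|$. Substituting gives $Q \ge (|I|+|O|) + \sum_k (U_k-1)|\text{tile}_k|$, and dividing through by $|I|+|O|$ yields the claimed bound on $R_A = Q/(|I|+|O|)$. The ``in particular'' clause then follows by specialization: if a tile of size $|\text{tile}_k| = \Theta(1)$ is reused $U_k = \Omega(n)$ times while the essential I/O $|I|+|O|$ stays bounded independently of the reuse, the single term $(U_k-1)|\text{tile}_k| = \Omega(n)$ dominates and forces $R_A = \Omega(n)$.

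The main obstacle I anticipate is not the algebra but the rigor of the accounting model: I must argue that the $U_k$ independent fetches are genuinely and non-redundantly counted in $Q$ (each crosses the inter-node / two-level boundary, and none can be ``charged'' to some other activity or silently eliminated), so that no algorithm lacking in-network reuse can avoid paying them; otherwise the lower bound collapses. This requires pinning down, from the execution model of Section~3, precisely which byte transfers count toward $Q$, and stating the tile-partition identity $|I| = \sum_k |\text{tile}_k|$ as a modeling convention. A secondary care point is making the asymptotic regime of the corollary explicit, namely constant tile granularity (per Axiom~\ref{ax:memory}) together with essential I/O bounded independently of the reuse, so that the $\Omega(n)$ growth of the reuse term is not absorbed by a growing denominator.
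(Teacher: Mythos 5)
Your proposal is correct and follows essentially the same route as the paper's own proof: both decompose the communication volume as $Q \ge (|I|+|O|) + \sum_k (U_k-1)\cdot|\text{tile}_k|$, with the baseline covering the first fetch of each tile and the redundant term covering the remaining $U_k-1$ independent fetches, then divide by $|I|+|O|$. Your explicit statement of the tile-partition identity $|I|=\sum_k|\text{tile}_k|$ and of the side conditions needed for the $R_A=\Omega(n)$ specialization (constant tile granularity and essential I/O not growing with the reuse) is in fact slightly more careful than the paper's version, which leaves those conditions implicit.
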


\begin{proof}[Proof Sketch]
This proof aims to quantify the communication overhead incurred by violating the single-read reuse principle. Our strategy is to express the lower bound of the total communication volume $Q$ as the sum of two components: the unavoidable baseline I/O, which is $|I|+|O|$, and the redundant I/O generated by each tile $\text{tile}_k$ being repeatedly fetched from the source $U_k-1$ times. By substituting this lower bound of $Q$ into the definition of the I/O amplification factor $R_A$, we can directly derive that $R_A$ must be strictly greater than 1. The excess part is proportional to the normalized redundant communication volume. To satisfy the communication optimality Axiom \ref{ax:comm} (i.e., $R_A=O(1)$), this redundant term must be eliminated, thus proving the necessity of single-read reuse (S1). The detailed formal proof is provided in  \ref{sec:appendix-prop51}.
\end{proof}

\subsection{Necessary Conditions for Robustness}

\begin{proposition}[Necessity of Statelessness and Idempotency]
\label{prop:stateless-idempotent}
Under Axiom \ref{ax:perturb}, if a micro-task reads/modifies a state shared across tasks, or if the target merge does not satisfy the properties of Axiom \ref{ax:merge}, then any algorithm that guarantees correctness must satisfy one of the following:
\begin{enumerate}
    \item[(C1)] It introduces per-step/per-batch persistent writes (checkpoints or write-ahead logs), making it impossible for $R_A$ to remain $O(1)$.
    \item[(C2)] It introduces global sequential coordination/barriers, thereby falling into the scalability lower bound of Proposition \ref{prop:barrier-less}.
\end{enumerate}
\end{proposition}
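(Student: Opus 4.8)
The plan is to prove the contrapositive by an adversarial-execution argument, organized as a case analysis on which hypothesis of the ``bad'' branch holds. Concretely, I would suppose, for contradiction, an algorithm $\mathcal{A}$ that (i) guarantees correctness under Axiom~\ref{ax:perturb}, (ii) contains at least one micro-task that reads or writes cross-task mutable state, or uses a target merge lacking one of the idempotence/commutativity/associativity properties of Axiom~\ref{ax:merge}, and (iii) satisfies neither (C1) nor (C2). The goal is then to exhibit, in each case, an execution admitted by the perturbation model that drives $\mathcal{A}$ to an incorrect result, contradicting (i). The leverage throughout is that Axiom~\ref{ax:perturb} hands the environment three adversarial powers---constant-probability node failure ($p_f>0$), and message reordering and duplication over a merely fair-lossy channel---each of which I pair with exactly one of the forbidden structural features.

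First I would treat the stateful branch. Let $\sigma$ be a piece of cross-task mutable state, resident in the fast memory of some node $p$, on whose value a later computation's correctness depends. Since node failures occur independently with probability $p_f>0$, over any sufficiently long trace the adversary can schedule a crash of $p$ after $\sigma$ has been mutated but before the dependent computation consumes it; such a trace has positive probability and is therefore admissible. Because (iii) forbids (C1), $\sigma$ was never durably recorded, so it is irrecoverable from $p$ alone after the crash. To remain correct, $\mathcal{A}$ must reconstruct an equivalent value from the surviving nodes; but doing so without double-counting or omitting contributions requires the survivors to agree on precisely which increments had been folded into $\sigma$ at the crash point, i.e. a globally consistent cut. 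Establishing such a cut is exactly a global sequential coordination / barrier, which is (C2), contradicting (iii).

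Next I would treat the algebraic branch. If $\oplus$ is not idempotent, then by the duplication clause of Axiom~\ref{ax:perturb} the adversary delivers some increment $\delta$ (with identifier \texttt{rid}) twice to its target, and non-idempotence makes the twice-merged state differ from the correct once-merged value. To preserve correctness $\mathcal{A}$ must suppress the duplicate, which demands either remembering the set of already-applied identifiers in a form that survives a target-node crash, a per-increment durable write and hence (C1), or upgrading the fair-lossy channel to exactly-once delivery, which over a lossy channel necessitates acknowledgement/retransmission sequencing, a coordination protocol and hence (C2). The analogous sub-case where $\oplus$ lacks commutativity or associativity is handled by the reordering clause: the adversary permutes delivery order so that an order-sensitive merge yields an order-dependent answer, and the only way to force the canonical order is to serialize merges through a global sequencer/barrier, again (C2). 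In every sub-case we land in (C1) or (C2), contradicting (iii) and completing the contrapositive; the quantitative tails of the statement then follow by charging the per-step persistent writes of (C1) against the fixed baseline $|I|+|O|$ to defeat $R_A=O(1)$, and by invoking Proposition~\ref{prop:barrier-less} for the scalability collapse entailed by (C2).

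The hard part, I expect, is making rigorous the twin claims that reconstruction without a durable log requires coordination and that duplicate suppression without a durable log requires coordination, since a priori one might hope for a clever local, log-free, barrier-free scheme. I would close this gap with an indistinguishability argument in the style of classical asynchronous impossibility proofs: construct two admissible executions that agree on everything observable to the log-free, coordination-free $\mathcal{A}$---one in which $\delta$ has already been applied (respectively, $\sigma$ held one value) and one in which it has not (respectively, held another)---so that $\mathcal{A}$ must act identically in both, yet correctness demands opposite actions. This reduces the survival of statefulness or non-idempotency under Axiom~\ref{ax:perturb} to a consensus-like decision that, absent durable state, can only be resolved by the agreement primitive we have labelled (C2). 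The detailed construction and the bookkeeping for the $R_A$ bound would be deferred to the appendix.
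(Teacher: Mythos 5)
Your proposal is correct and its skeleton matches the paper's: a case split between (i) cross-task mutable state and (ii) a merge operator violating Axiom~\ref{ax:merge}, with each case resolving into the same dichotomy of durable persistence (C1) versus global ordering/coordination (C2). The differences are in framing and coverage. The paper's appendix proof argues \emph{forward}: it simply enumerates the two available fault-tolerance strategies (a persistent set $H$ of processed \texttt{rid}s, or a total-order broadcast service) and charges each to (C1) or (C2) respectively; for the stateful case it goes directly to checkpointing/WAL and (C1). You instead argue by contradiction from the negation of (C1)$\lor$(C2), using the adversarial powers of Axiom~\ref{ax:perturb} (crash, duplication, reordering) to force an incorrect execution. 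Two things your version buys that the paper's does not: first, you explicitly handle the non-commutative/non-associative subcase via the reordering clause, whereas the paper's formal proof only treats non-idempotency, even though the proposition's hypothesis covers all three algebraic properties; second, you correctly identify that the exhaustiveness of the persistence-versus-coordination dichotomy is the actual mathematical content of the claim, and you propose an indistinguishability argument to rule out a log-free, barrier-free third option --- a gap the paper's proof papers over by asserting the two strategies as the only ones. The cost of your route is that the indistinguishability construction is left as a program rather than carried out, so in its current form your proof is no more complete than the paper's; but it is the right program, and executing it would make the proposition strictly stronger than what the appendix establishes.
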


\begin{proof}[Proof Sketch]
This proof aims to reveal that, in the environment of perturbations and failures described by Axiom \ref{ax:perturb}, any paradigm relying on mutable state or non-idempotent operations inevitably falls into a dilemma. To ensure correctness, such systems must choose between two costly fault-tolerance strategies. The first is to remember through persistence (e.g., checkpoints), which generates enormous communication overhead, violating Axiom \ref{ax:comm}. The second is to enforce order (e.g., total-order broadcast), which is equivalent to a global barrier and, according to Proposition \ref{prop:barrier-less}, destroys linear scalability. Since both paths lead to the violation of core axioms, the paradigm must adopt stateless micro-tasks (S2) and idempotent merge (S3). The detailed formal proof is provided in \ref{sec:appendix-prop52}.
\end{proof}

\begin{proposition}[Necessity of Barrier-less Asynchronous Scheduling]
\label{prop:barrier-less}
Let a barrier-synchronized parallel round consist of independent latencies $\tau_1,\dots,\tau_{|P|}$. If the tail of $\tau$s distribution is heavy (per Axiom \ref{ax:perturb}), then the expected round completion time, $T_{\mathrm{round}}=\max_{i\le |P|}\tau_i$, is an increasing function of $|P|$. Consequently, throughput $S(P)$ is sub-linear, making $S(P)=\Theta(P)$ unattainable.
\end{proposition}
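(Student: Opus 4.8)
The plan is to recast the scalability claim as a statement in extreme-value theory about the expected maximum of $|P|$ i.i.d. heavy-tailed latencies, and then to convert that growth rate into a throughput bound. The reduction is immediate: a barrier releases the round only after its slowest task completes, so by definition $M_{|P|} := T_{\mathrm{round}} = \max_{i \le |P|} \tau_i$, and the quantity of interest is $\mathbb{E}[M_{|P|}]$. Writing $F$ for the common CDF of the $\tau_i$, independence gives the exact law $\Pr[M_{|P|} \le t] = F(t)^{|P|}$, which is the only structural fact the argument needs.

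First I would dispatch monotonicity and unboundedness cheaply. Since $\max_{i \le n+1}\tau_i \ge \max_{i \le n}\tau_i$ holds pointwise, taking expectations yields $\mathbb{E}[M_{n+1}] \ge \mathbb{E}[M_n]$, so $\mathbb{E}[T_{\mathrm{round}}]$ is nondecreasing in $|P|$; and because the heavy tail of Axiom \ref{ax:perturb} forces $\Pr[\tau > t] > 0$ for every $t$, the maximum diverges almost surely, hence $\mathbb{E}[M_n] \to \infty$. This already excludes any barrier schedule with bounded round time.

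Next I would quantify the rate. Using the tail law $\Pr[\tau > t] \sim c\,t^{-\alpha}$, I fix the threshold $t_n$ solving $\Pr[\tau > t_n] = 1/n$, so that $t_n = \Theta(n^{1/\alpha})$. At this threshold $\Pr[M_n > t_n] = 1 - (1 - 1/n)^n \to 1 - e^{-1}$, and the elementary inequality $\mathbb{E}[M_n] \ge t_n\,\Pr[M_n > t_n]$ delivers $\mathbb{E}[M_n] = \Omega(n^{1/\alpha})$. Substituting $n = |P|$, each barrier round therefore costs $\mathbb{E}[T_{\mathrm{round}}] = \Omega(|P|^{1/\alpha})$. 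Since a round performs $\Theta(|P|)$ units of work (one micro-task per node), the steady-state throughput satisfies $S(P) = \Theta\!\bigl(|P|/\mathbb{E}[T_{\mathrm{round}}]\bigr) = O(|P|^{\,1-1/\alpha})$; for any finite $\alpha$ the exponent is strictly below $1$, so $S(P)$ is sublinear and $S(P) = \Theta(P)$ is unattainable. This is exactly the convoy effect that barrier-less scheduling (S4) is forced to eliminate.

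The main obstacle is making the extreme-value asymptotics rigorous: the relation $\Pr[\tau > t] \sim c\,t^{-\alpha}$ determines $t_n$ only up to slowly-varying corrections, and in the regime $\alpha \le 1$ the per-task mean itself diverges, so the integral representation $\mathbb{E}[M_n] = \int_0^\infty \bigl(1 - F(t)^n\bigr)\,dt$ must be handled with care. I would circumvent both difficulties by relying solely on the one-sided threshold bound above, which needs only a lower estimate on $\mathbb{E}[M_n]$ and remains valid whenever the tail is regularly varying, rather than on a sharp Fréchet limit law.
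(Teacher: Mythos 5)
Your proposal follows the same overall strategy as the paper's proof: reduce the claim to the growth of $\mathbb{E}[\max_{i\le|P|}\tau_i]$ for i.i.d.\ heavy-tailed latencies, obtain a polynomial lower bound $\Omega(|P|^{1/\alpha})$, and divide it into the $\Theta(|P|)$ work per round to get $S(P)=O(|P|^{1-1/\alpha})=o(P)$. The difference is in how the key estimate is obtained. The paper simply asserts $\mathbb{E}[T_{\mathrm{round}}]=\Theta(|P|^{1/\alpha})$ for $\alpha>1$ as ``a known result'' from extreme value theory, whereas you derive the needed $\Omega$ direction from first principles: set $t_n$ at the $1/n$ upper quantile so $t_n=\Theta(n^{1/\alpha})$, observe $\Pr[M_n>t_n]=1-(1-1/n)^n\to 1-e^{-1}$, and apply the elementary bound $\mathbb{E}[M_n]\ge t_n\Pr[M_n>t_n]$ for nonnegative variables. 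This buys you three things the paper's version lacks: the argument is self-contained rather than an appeal to a limit theorem; it needs only a one-sided lower bound, so it is robust to slowly varying corrections in the tail and does not require the sharp Fr\'echet asymptotics; and it covers $\alpha\le 1$ (where the per-task mean already diverges and the paper's restriction to $\alpha>1$ leaves a gap), since your bound still forces $\mathbb{E}[T_{\mathrm{round}}]\to\infty$ and hence sublinear $S(P)$. Your separate pointwise-monotonicity observation also cleanly establishes the ``increasing in $|P|$'' clause of the proposition, which the paper only gestures at. The argument is correct; the only caveat worth recording is that $\Pr[\tau>t_n]=1/n$ should be read via the generalized inverse of the tail function when $F$ has atoms, which does not affect the asymptotics.
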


\begin{proof}[Proof Sketch]
This proof reveals from first principles why synchronization barriers combined with heavy-tailed latency (Axiom \ref{ax:perturb}) fundamentally destroy linear scalability. The convoy effect of barriers means the round time is determined by the slowest task. Based on principles from extreme value theory, for any heavy-tailed distribution, the expectation of the maximum of $|P|$ i.i.d. variables is a monotonically increasing function of $|P|$. This is in stark contrast to distributions with exponentially decaying tails (e.g., Gaussian), where the expected maximum grows much more slowly, typically logarithmically. Since throughput $S(P) \propto |P|/E[T_{\mathrm{round}}]$, an increasing round time necessarily leads to sub-linear scalability ($S(P)=o(P)$). Therefore, barrier-less scheduling (S4) is a necessary condition. The detailed formal proof is provided in \ref{sec:appendix-prop53}.
\end{proof}

\subsection{Necessary Condition for Bounded Memory}

\begin{proposition}[Necessity of Sliding Window]
\label{prop:sliding-window}
If an algorithm hoards unused inputs within a node to reduce communication frequency, then there exists a $\beta>0$ such that $W_{\max} = \Omega(n^{\beta})$, thus violating the memory upper bound of Axiom \ref{ax:memory}.
\end{proposition}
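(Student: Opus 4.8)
The plan is to prove the statement by contraposition, converting the qualitative notion of ``hoarding'' into a quantitative memory--communication trade-off. First I would pin down the meaning of hoarding: let $H(n)$ denote the maximum, over all nodes $p\in P$ and all times $t$, of the number of input tiles resident in the fast memory of $p$ but not yet consumed. The sliding-window discipline (S5) is precisely the regime $H(n)=O(1)$, in which a tile is re-fetched or merged as it is needed and the communication volume $Q$ scales with the number of consumptions. An algorithm ``hoards to reduce communication frequency'' exactly when it keeps a growing set of resident-but-unused tiles in order to amortize communication events and thereby push $Q$ below this streaming baseline. The target is then to show that doing so forces $H(n)=\Omega(n^{\beta})$ for some $\beta>0$.

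Next I would extract the trade-off from Axiom~\ref{ax:comm}. For the class of reuse-heavy problems covered by that axiom, the red-blue pebble game gives a lower bound of the form $Q \ge c\,g(n)/h(W_p)$ with $h$ strictly increasing, so enlarging the local working set is the \emph{only} mechanism that can depress $Q$. Read in reverse, lowering $Q$ by any polynomial factor below what is attainable with $O(1)$ memory forces the effective working set to grow as $W_p=\Omega(n^{\beta})$, where the exponent $\beta>0$ is fixed by those appearing in $g$ and $h$. Since the resident-but-unused tiles counted by $H(n)$ are exactly the contents of this enlarged working set, we conclude $H(n)=\Omega(n^{\beta})$.

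To close the argument I would note that each resident tile occupies at least one unit of fast memory at the granularity of Axiom~\ref{ax:memory}, so that
$$W_{\max}=\max_{p\in P}\max_{t}(\text{fast memory of }p\text{ at time }t)\ \ge\ H(n)=\Omega(n^{\beta}),$$
which directly contradicts the requirement $W_p=W_{\max}=O(1)$. Hence no algorithm can at once hoard to save communication and respect the memory bound; the sole admissible discipline is the bounded sliding window of S5. This mirrors the communication blowup of Proposition~\ref{prop:single-read}: re-fetching inflates $Q$ while hoarding inflates $W_{\max}$, so only the combination of network-level reuse (S1) and a constant window (S5) escapes both, establishing the necessity of S5.

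The main obstacle is that the exponent $\beta$ supplied by Axiom~\ref{ax:comm} is problem-dependent, varying with the dependency-graph structure, so the trade-off is not uniform across all computations. I would handle this in one of two ways. The clean but weaker route is to state the result only for the pebble-game class of Axiom~\ref{ax:comm}, for which the existence of some $\beta>0$ is guaranteed. The stronger route is to exhibit an explicit witness---for instance, a tiled computation in which each input tile must be combined with $\Omega(n)$ others, so that avoiding redundant re-fetches forces a node to hold $\Omega(n^{\beta})$ tiles co-resident---and then argue directly by counting that any reduction of communication below the streaming baseline leaves some node buffering a polynomial number of unused tiles. The witness route is appealing because it makes the hoarding--memory collision concrete and pins down a specific $\beta>0$ without invoking the general pebble-game exponents.
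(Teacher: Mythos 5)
Your proposal follows essentially the same route as the paper's proof: both invoke the red-blue pebble-game memory--communication trade-off from Axiom~\ref{ax:comm} to argue that depressing $Q$ forces the resident working set to grow as $\Omega(n^{\beta})$, which then collides with the $W_p=O(1)$ requirement of Axiom~\ref{ax:memory}. Your additions---the explicit quantitative definition of hoarding via $H(n)$ and the acknowledgment that $\beta$ is problem-dependent (which the paper's sketch glosses over)---are refinements of, not departures from, the paper's argument.
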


\begin{proof}[Proof Sketch]
This proof originates from I/O complexity theory, which reveals a fundamental trade-off between memory size and communication volume. To approach the communication lower bound (Axiom \ref{ax:comm}), an algorithm must maximize data reuse, which incentivizes hoarding data in memory. However, I/O complexity theory proves that for many problems, the size of the live dataset required to achieve optimal communication grows polynomially with the problem size $n$. This leads to $W_{\max} = \Omega(n^{\beta})$, which directly contradicts the bounded memory Axiom \ref{ax:memory} ($W_{\max}=O(1)$). The only way to resolve this conflict is to adopt a use and discard pipelined model, which is the essence of the sliding window (S5). The detailed formal proof is provided in  \ref{sec:appendix-prop54}.
\end{proof}

\subsection{The Uniqueness Main Theorem}

\begin{theorem}[Uniqueness up to metric-equivalence]
\label{thm:uniqueness}
Under Axioms A1--A4, every correct paradigm is transformable by T1--T4 into an SDPF instance that is unique as a normal form modulo metric-equivalence (Def. 3.2), i.e., their communication, peak memory, and scalability differ by at most constant factors independent of $n$ and $|P|$.
\end{theorem}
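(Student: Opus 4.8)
The plan is to read T1--T4 as a terminating rewriting system on correct paradigms and to prove a confluence-style statement: every correct paradigm reduces to an SDPF normal form whose three cost metrics are each pinned to a single asymptotic class by the axioms. Concretely, I would take each transformation to be the constructive contrapositive of a necessity proposition --- T1 installs S1 (Prop.~\ref{prop:single-read}), T2 jointly installs S2 and S3 (Prop.~\ref{prop:stateless-idempotent}), T3 installs S4 (Prop.~\ref{prop:barrier-less}), and T4 installs S5 (Prop.~\ref{prop:sliding-window}). The theorem then factors into an \emph{existence} half (any correct paradigm is reducible to some SDPF instance) and a \emph{uniqueness} half (any two SDPF instances for the same problem agree on $R_A$, $W_{\max}$, and $S(P)$ up to constants).

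For existence, I would first show that each $T_i$ preserves the computed output and inflates the metrics it does not target by at most a constant factor. Correctness preservation is where Axiom~\ref{ax:merge} does the work: because $\oplus$ is commutative, associative, and idempotent per rid, the merged state is invariant under the reordering, duplication, and deduplication introduced by T2, under the loosened ordering of barrier removal in T3, and under the in-network replication of T1. I would then fix the order $T_4\circ T_3\circ T_2\circ T_1$ and prove an invariance lemma asserting that once $T_i$ has installed its property, no later $T_j$ destroys it: statelessness and idempotency are structural and survive rescheduling and windowing; single-read ingestion concerns only the source layer and is orthogonal to scheduling and buffering; and barrier removal merely weakens ordering, so it can reintroduce neither redundant source reads nor unbounded buffers. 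Since each $T_i$ is applied once, termination is immediate, and the composite maps any correct paradigm to a genuine SDPF instance satisfying S1--S5 simultaneously.

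For uniqueness, the key observation is that the axioms furnish matching lower bounds to the upper bounds SDPF achieves, trapping each metric in one $\Theta$-class. An SDPF instance has $R_A=\Theta(1)$, since S1 removes the redundant term of Proposition~\ref{prop:single-read} to give $R_A=O(1)$ while Axiom~\ref{ax:comm} forces $Q=\Omega(|I|+|O|)$; it has $W_{\max}=\Theta(1)$, since S5 gives $W_{\max}=O(1)$ and a node must hold at least one active tile; and it has $S(P)=\Theta(P)$, since S4 with the robustness of S2/S3 achieves the $\Omega(P)$ side while $|P|$ nodes cap the $O(P)$ side. Hence for any two SDPF instances $\mathcal{A},\mathcal{B}$ computing the same problem, each metric of $\mathcal{A}$ shares a $\Theta$-class with the corresponding metric of $\mathcal{B}$, so their ratios are bounded by constants independent of $n$ and $|P|$; this is exactly metric-equivalence.

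The hard part will be the invariance lemma of the existence half --- essentially confluence of the rewriting system. Although the four transformations target nominally orthogonal resource dimensions, I must rule out cyclic interference: most delicately, that fault-tolerant replay under T2 does not covertly reintroduce redundant source reads (breaking S1) or demand unbounded replay buffers (breaking S5). The resolution should exploit the rid-tagged, target-addressed data model of Section~3: replay re-sends an already-ingested, self-describing increment rather than re-reading slow storage, so its cost is charged to the bounded communication and windowing budgets instead of to fresh I/O. Making this non-interference precise --- ideally as a local-confluence argument closed by a Newman-style termination observation --- is where the genuine content lies; once it holds, the metric-pinning of the uniqueness half follows directly from the four necessity propositions together with their matching axiomatic lower bounds.
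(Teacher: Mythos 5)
Your existence half is essentially the paper's own proof: the paper applies $T_4\circ T_3\circ T_2\circ T_1$ to an arbitrary optimal paradigm and proves a per-step preservation lemma for each step ($R_A$ non-increasing under T1, constant per-message metadata overhead for T2, $S(P)$ non-decreasing under T3, and $Q(\mathcal{E}_4)\le c\cdot Q(\mathcal{E}_3)+O(|I|+|O|)$ for T4), then concludes by transitivity of $\approx$. Your explicit non-interference (local-confluence) lemma is a genuine addition --- the paper silently assumes that later transformations do not destroy earlier-installed properties --- and your worry that replay under T2 might covertly re-read slow storage or demand unbounded buffers is exactly the cross-term the paper never checks.

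The gap is in your uniqueness half. You derive uniqueness by pinning each metric to a single $\Theta$-class, but the necessity propositions only give one direction: they show that violating one of S1--S5 forces a metric to blow up, not that satisfying it achieves the optimal class. Concretely, S1 eliminates the redundant-source-fetch term of Proposition 5.1 but says nothing about in-network communication volume, so two SDPF instances for the same problem can still have $R_A$ differing by more than a constant; worse, Axiom A1's own lower bound $Q=\Omega(f(n,W_p))$ combined with $W_p=O(1)$ from Axiom A3 can force $Q=\omega(|I|+|O|)$ for problems with nontrivial data reuse, so $R_A=\Theta(1)$ is not even attainable for every problem in scope. Likewise $S(P)=\Omega(P)$ does not follow from S4 plus S2/S3: barrier-freedom is necessary but not sufficient for linear scaling, since load imbalance and hotspot targets are untouched by S1--S5. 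The paper avoids this trap by defining uniqueness relationally --- the normal form is metric-equivalent to the (assumed-optimal) paradigm it was derived from, via the chain $\mathcal{E}_0\approx\mathcal{E}_1\approx\cdots\approx\mathcal{E}_4$ --- rather than by membership in an absolute $\Theta$-class. To keep your $\Theta$-pinning route you would need achievability (upper-bound) lemmas matching each necessity proposition; these are not in the paper and do not follow from S1--S5 alone.
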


\begin{proof}[Proof Sketch]
This proof employs a constructive reduction method. We start with an arbitrary optimal paradigm $\mathcal{E}$ and apply a series of transformations (T1-T4). Propositions \ref{prop:single-read} through \ref{prop:sliding-window} provide the theoretical basis, showing that each correction is necessary to satisfy a core axiom. Crucially, each transformation is shown to be performance-preserving. The overheads introduced, such as the management of replay identifiers (rid) in T2 or the buffering required by T4, are shown to be constant or bounded, and most importantly, independent of the system scale $|P|$ and network topology. The resulting paradigm $\mathcal{E}$ is, by definition, an SDPF normal form and is performance-equivalent to $\mathcal{E}$. This proves that SDPF is the unique normal form for all optimal paradigms. The full proof with formal preservation lemmas for each transformation is provided in  \ref{sec:appendix-thm55}.
\end{proof}

\begin{figure}[h!]
  \centering
  \includegraphics[width=0.6\columnwidth]{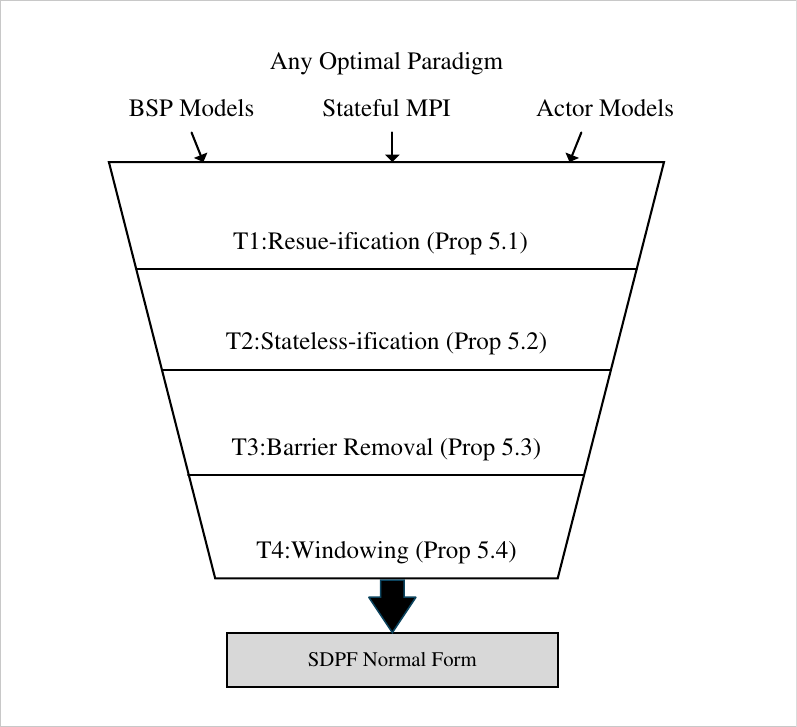}
  \caption{The constructive reduction proof of Theorem 5.5. Any paradigm claiming optimality under the axioms is progressively transformed by eliminating inefficiencies (T1-T4), inevitably converging to the unique SDPF normal form.}
  \label{fig:reduction-funnel}
\end{figure}

\begin{corollary}[Impossibility Dual]
\label{cor:impossibility}
For any execution paradigm $\mathcal{E}$, if $\mathcal{E}\not\equiv$ SDPF, then at least one of the trilemma goals is violated: either $R_A=\omega(1)$, or $W_{\max}=\omega(1)$, or $S(P)=o(P)$.
\end{corollary}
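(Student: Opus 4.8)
The plan is to prove the contrapositive of the stated implication: assume a paradigm $\mathcal{E}$ simultaneously attains all three trilemma goals, that is $R_A = O(1)$, $W_{\max} = O(1)$, and $S(P) = \Theta(P)$, and from this derive that $\mathcal{E}$ must be metric-equivalent to the SDPF normal form, contradicting $\mathcal{E} \not\equiv \text{SDPF}$. Read this way, the corollary is exactly the dual packaging of the four necessity propositions (Propositions~\ref{prop:single-read}--\ref{prop:sliding-window}) together with Theorem~\ref{thm:uniqueness}: each proposition asserts that deviating from one of the properties S1--S5 forces a specific trilemma violation, so asserting that \emph{no} trilemma goal is violated simultaneously forecloses every avenue of deviation and pins $\mathcal{E}$ to the normal form.

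Concretely, I would invoke the propositions one at a time in contrapositive form. From $W_{\max} = O(1)$, Proposition~\ref{prop:sliding-window}---which shows that input hoarding forces $W_{\max} = \Omega(n^{\beta})$---excludes hoarding, so property S5 must hold. From $S(P) = \Theta(P)$, Proposition~\ref{prop:barrier-less}---which shows that a global barrier under Axiom~\ref{ax:perturb} forces $S(P) = o(P)$---excludes synchronization barriers, so S4 must hold. From $R_A = O(1)$, Proposition~\ref{prop:single-read}---which shows that independent re-fetching with some $U_k = \Omega(n)$ forces $R_A = \Omega(n)$---excludes redundant source reads, so S1 must hold. For the pair S2 and S3 I would appeal to Proposition~\ref{prop:stateless-idempotent}: a stateful or non-idempotent design must follow either branch (C1), which contradicts $R_A = O(1)$, or branch (C2), which reintroduces a barrier and hence---by Proposition~\ref{prop:barrier-less}---contradicts $S(P) = \Theta(P)$; with both escape routes now closed, stateless micro-tasks and idempotent merge are forced. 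Having established S1 through S5 up to constant factors, $\mathcal{E}$ meets the SDPF normal form modulo the metric-equivalence of Definition~3.2, i.e., $\mathcal{E} \equiv \text{SDPF}$.

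The hard part will be bridging the qualitative hypothesis $\mathcal{E} \not\equiv \text{SDPF}$ and the sharper asymptotic hypotheses under which the propositions are phrased (for instance $U_k = \Omega(n)$ in Proposition~\ref{prop:single-read}). Since metric-equivalence absorbs arbitrary constant factors, I must argue that any genuine failure of an SDPF property---one that survives the $O(1)$ slack of Definition~3.2---necessarily activates the super-constant regime each proposition requires. I would discharge this by observing that the negation of each property is scale-coupled by construction: without single-read reuse at least one tile is re-fetched a number of times that grows with its consumer count; without a sliding window the live working set grows with $n$; and without barrier-freedom at least one synchronization point gates a $|P|$-fold maximum of heavy-tailed latencies. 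Each such coupling is precisely what drives the relevant metric out of $O(1)$ into $\omega(1)$ or out of $\Theta(P)$ into $o(P)$, so no constant-factor reinterpretation can absorb it. A secondary point of care is the exhaustiveness of the $(C1)$--$(C2)$ dichotomy in Proposition~\ref{prop:stateless-idempotent}: I must confirm that for any correct stateful or non-idempotent paradigm under Axiom~\ref{ax:perturb} these two fault-tolerance strategies are jointly exhaustive, so that closing both genuinely forces statelessness and idempotency rather than leaving an unexamined third route.
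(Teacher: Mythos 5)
Your proposal is correct and matches the paper's (implicit) argument: the corollary is stated without a separate proof precisely because it is the contrapositive of Theorem~\ref{thm:uniqueness}, which in turn rests on Propositions~\ref{prop:single-read}--\ref{prop:sliding-window} exactly as you invoke them. Your closing caveats---bridging the qualitative negation of each S-property to the asymptotic hypotheses of the propositions, and the exhaustiveness of the (C1)/(C2) dichotomy---are gaps the paper itself leaves at the same level of informality, so they do not distinguish your route from the paper's.
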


\section{Convergence, Completeness, and Minimality}

Chapter 5 established the uniqueness and necessity of the SDPF normal form as a consequence of physical axioms. This chapter addresses the subsequent fundamental question: Is this necessary paradigm also a sufficient and robust model of computation? To answer this, we formally prove that SDPF satisfies three canonical properties required of any complete theoretical paradigm. First, we prove its guaranteed \textit{convergence}, ensuring correctness in a chaotic environment. Second, we establish its \textit{Turing completeness}, demonstrating its universal expressive power. Finally, we prove the \textit{minimality} of its design, confirming its theoretical elegance and efficiency. Together, these properties solidify SDPFs status as a complete and foundational model for computation under physical constraints.

\subsection{Convergence}

\begin{definition}[Semilattice Semantics]
For each target, its state space forms a semilattice $(\mathcal{S},\oplus)$ where the merge operator $\oplus$ is commutative, associative, and idempotent (per Axiom \ref{ax:merge}). The global state is the Cartesian product semilattice $\mathcal{L}=\prod_t(\mathcal{S}_t,\oplus)$.
\end{definition}

\begin{theorem}[Strong Eventual Consistency]
\label{thm:consistency}
Under SDPF and Axiom \ref{ax:perturb} fairness, any state sequence $(A_0,A_1,\dots)\subseteq \mathcal{L}$ produced by an execution trace is monotonically non-decreasing and converges to a unique limit, which is independent of message arrival order and repetition.
\end{theorem}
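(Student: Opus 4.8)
The plan is to recognize this as a strong-eventual-consistency result in the style of CRDT convergence, in which the three network pathologies permitted by Axiom~\ref{ax:perturb} (reordering, loss, duplication) are each neutralized by a distinct algebraic or environmental guarantee. First I would equip the global state space $\mathcal{L}=\prod_t(\mathcal{S}_t,\oplus)$ with the canonical semilattice order $x\sqsubseteq y \iff x\oplus y = y$ and verify it is a partial order: reflexivity follows from idempotency of $\oplus$, antisymmetry from commutativity, and transitivity from associativity (Axiom~\ref{ax:merge}). Since a product of semilattices is itself a semilattice under componentwise merge, this order lifts to all of $\mathcal{L}$, and each target component can be analyzed independently before taking the product.

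The second step is monotonicity. Every state transition in an SDPF trace has the form $A_{k+1}=A_k\oplus\delta_k$, where $\delta_k$ is the increment merged into its target (property S3). A one-line computation using associativity and idempotency, $A_k\oplus A_{k+1}=A_k\oplus(A_k\oplus\delta_k)=(A_k\oplus A_k)\oplus\delta_k=A_k\oplus\delta_k=A_{k+1}$, shows $A_k\sqsubseteq A_{k+1}$; hence $(A_0,A_1,\dots)$ is a $\sqsubseteq$-chain, which is exactly the monotone non-decreasing claim.

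The third step establishes order- and repetition-independence purely algebraically. Let $\Delta$ denote the set of distinct semantic contributions, each identified by its rid, that the execution ever produces for a given target. For any finite multiset drawn from $\Delta$ — regardless of the order in which its elements are applied and regardless of how many times each appears — commutativity and associativity collapse the iterated merge to the join $\bigsqcup$ of the underlying set, while idempotency for equal rid erases the multiplicities introduced by network duplication. Thus the value accumulated at a target depends only on which set of rids has been delivered, never on the delivery schedule.

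Finally I would close with fairness. Under the fair-lossy guarantee of Axiom~\ref{ax:perturb}, any increment re-sent infinitely often is eventually delivered, so every element of $\Delta$ is eventually merged into its target; combined with monotonicity, the target state is a chain whose delivered-rid set grows to all of $\Delta$ and therefore converges to $\bigsqcup_{\delta\in\Delta}\delta$, and taking the product over targets yields the unique global limit. The main obstacle I anticipate is the analytic, rather than algebraic, half of convergence: showing the limit is genuinely attained. For a finite computation $\Delta$ is finite and the join is reached after finitely many deliveries, but for an unbounded execution one must guarantee that the supremum of the chain exists — which requires assuming $\mathcal{L}$ is a complete (or directed-complete) semilattice, as is standard for CRDT lattices — and must invoke the fair-lossy model \emph{exactly}, using the re-send-implies-delivery clause to exclude any contribution being lost forever rather than silently assuming the network delivers everything outright.
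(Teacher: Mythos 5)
Your proposal is correct and follows essentially the same route as the paper's own proof: the same canonical semilattice order $x \sqsubseteq y \iff x \oplus y = y$, the same one-line monotonicity computation, the same appeal to commutativity, associativity, and idempotency for order- and repetition-independence, and the same use of the fair-lossy assumption to conclude convergence to the join of all delivered increments. Your closing caveat---that an unbounded execution requires $\mathcal{L}$ to be (directed-)complete for the supremum of the chain to actually exist---is a point the paper's proof passes over silently, and is worth retaining.
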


\begin{proof}[Proof Sketch]
The theoretical guarantee for this deterministic outcome in a chaotic environment comes from the semilattice algebraic structure of the systems state space (Axiom \ref{ax:merge}). The core idea is that every merge operation ($\oplus$) causes the system state to monotonically move upward in the semilattices partial order. The commutativity and associativity of $\oplus$ guarantee independence from order, while idempotency guarantees independence from repetition. Since all execution paths advance monotonically towards a shared upper bound, they must converge to the same unique limit. The detailed formal proof is provided in  \ref{sec:appendix-thm62}.
\end{proof}

\subsection{Completeness}

To prove the universal computational capability of SDPF (i.e., Turing completeness), we adopt a standard strategy from theoretical computer science: proving that SDPF can fully simulate a primitive, Turing-complete model. We choose the SK combinator calculus as our simulation target.

Before delving into the technical details, it is crucial to clarify the core challenge addressed here. A logical equivalence like $Sxyz \to xz(yz)$ is trivial to implement in a centralized, single-machine environment. The complexity of our proof, however, arises from a fundamentally different question: How can a distributed system, composed of stateless nodes communicating over an unreliable network (as defined by Axiom A2), be guaranteed to atomically and deterministically achieve this logical state transition? Therefore, the focus of the following proof is not to reiterate the logical equivalence itself, but to demonstrate that SDPFs distributed mechanisms are capable of realizing this universal computational process. This establishes that the true challenge—and contribution—lies in ensuring this equivalence holds within a distributed physical reality.

\begin{definition}[Translation $\Phi$]
We define a translation function $\Phi$ from the SK combinator calculus to a multiset of SDPF tiles. SK nodes are mapped to tiles. Branch detects reducible expressions (redexes). Copy and Combine perform the rewrite by injecting new tiles and marking old ones with a tombstone, using a shared rid for atomicity.
\end{definition}

\begin{lemma}[One-Step Reduction Preservation]
\label{lemma:one-step}
If $M\to N$ is a basic SK reduction, then there exists a finite sequence of SDPF transformations such that $\Phi(M)\ \Longrightarrow^*\ \Phi(N)$.
\end{lemma}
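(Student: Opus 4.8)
The plan is to prove the lemma by a complete case analysis on the two basic reduction rules of the SK calculus, namely $K\,a\,b \to a$ and $S\,a\,b\,c \to (a\,c)(b\,c)$, exhibiting for each a concrete finite schedule of SDPF tile operations. First I would pin down the encoding underlying $\Phi$: an SK term is a binary applicative tree whose leaves are combinator tiles (of type $S$ or $K$) and whose internal nodes are application tiles, each carrying in its metadata the ids of its two child roots through the target and pc/next fields. Under this encoding a redex is a fixed structural pattern in the tile multiset, so the Branch operation is precisely the guard that fires when such a pattern is present, while Copy and Combine realize the structural rewrite.

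For the $K$-case the argument is the simpler one: Branch identifies the spine $@(@(K,a),b)$, after which a single Combine step (i) injects one redirection tile rewriting the redex root's incoming pointer to the root of the subtree encoding $a$, and (ii) emits tombstones for the two application tiles, the $K$-leaf, and the root handle of $b$'s subtree. All injected and tombstone tiles are stamped with one shared rid, so by idempotency (S3) the whole rewrite behaves as a single atomic merge. I would then verify that the resulting multiset, read modulo tombstoned (garbage) tiles, is exactly $\Phi(a)$ grafted in place, i.e. $\Phi(N)$.

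The $S$-case carries the real technical weight and is where the Copy operation enters. Here Branch detects the spine $@(@(@(S,a),b),c)$; Copy must produce a second, disjoint instance of the entire subtree rooted at $c$, and Combine then injects fresh application tiles realizing $(a\,c_1)$, $(b\,c_2)$ and the top application $((a\,c_1)(b\,c_2))$, redirects the incoming pointer, and tombstones the old $S$-spine. The subtlety, and the step I expect to be the main obstacle, is guaranteeing that this multi-tile rewrite is both atomic and deterministic under Axiom \ref{ax:perturb}: the network may reorder, drop, or duplicate any injected tile, so a naive fresh-id scheme would risk producing two divergent copies of $c$ on replay, breaking confluence. My proposed remedy is to derive every new id, including those of the duplicated $c$-subtree, as a deterministic function of the redex's shared rid together with the structural path of the node within the rewrite template; re-execution then regenerates byte-identical tiles that collapse under the idempotent merge $\oplus$, so the net effect is invariant to loss and repetition. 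With this in hand, the post-schedule multiset, again modulo tombstones, equals $\Phi((a\,c)(b\,c)) = \Phi(N)$.

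Finally I would assemble the two cases: since every basic reduction is an instance of one of these patterns applied at some redex position, and since the schedules act only on the matched subtree while leaving the rest of the tile multiset untouched, the exhibited sequences witness $\Phi(M) \Longrightarrow^* \Phi(N)$ in each case, completing the proof. The remaining obligations, namely that each Branch, Copy, and Combine step is realizable as admissible SDPF execution steps (stateless per S2, idempotent per S3, and sliding-window bounded per S5) and that the equivalence ``modulo tombstones'' coincides with the semilattice equivalence induced by Axiom \ref{ax:merge}, are routine once the deterministic-id construction above is fixed.
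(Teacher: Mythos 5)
Your proposal follows essentially the same constructive route as the paper's proof: Branch detects the redex pattern locally, Copy/Combine perform the graph rewrite by injecting the new tiles and tombstoning the old redex under a single shared rid, and the idempotency of $\oplus$ (Axiom~\ref{ax:merge}) is what makes the multi-tile rewrite semantically atomic under the reordering and duplication of Axiom~\ref{ax:perturb}. Your added requirement that the ids of all newly injected tiles (including the duplicated $c$-subtree in the $S$-case) be derived deterministically from the shared rid and the structural position in the rewrite template is a sound refinement rather than a different approach---the paper's assertion that ``duplicate new tiles are safely merged'' tacitly depends on replayed tiles being identical, so you have made explicit a step the paper leaves implicit.
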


\begin{proof}[Proof Sketch]
This lemma is the cornerstone for proving Turing completeness. It shows that SDPFs core primitives (Branch, Copy, Combine) are sufficient to simulate an atomic find-and-replace step of the SK calculus. The Branch primitive finds a reducible pattern, while Combine and Copy replace it by atomically injecting new tiles and retiring old ones. The robustness of this simulation against network chaos is guaranteed by the statelessness and idempotency of SDPF operations. The detailed construction is provided in \ref{sec:appendix-lem64}.
\end{proof}

\begin{figure}[h!]
  \centering
  \includegraphics[width=0.9\columnwidth]{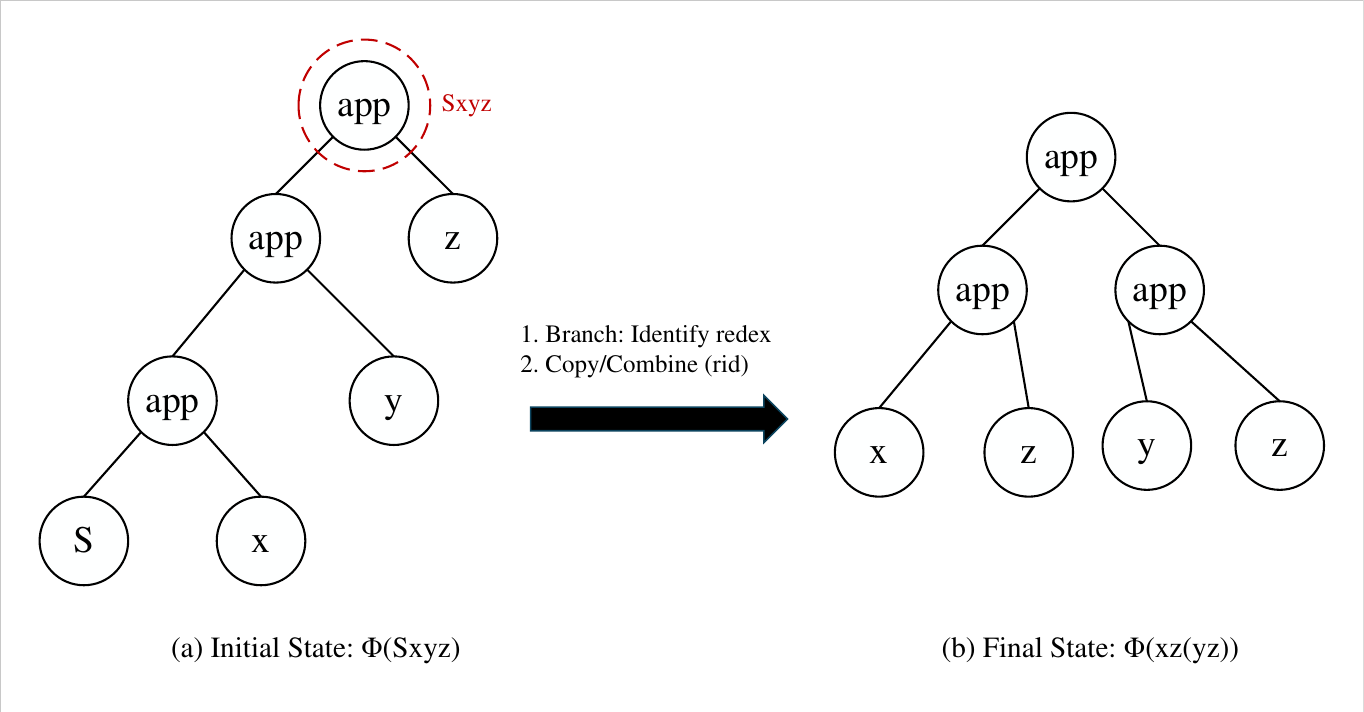}
  \caption{Simulating an SK-combinator reduction ($Sxyz \to xz(yz)$) using SDPF primitives. The transformation is guaranteed to be atomic and idempotent by the Combine operation and its unique rid.}
  \label{fig:sk-simulation}
\end{figure}

\begin{theorem}[Turing Completeness]
\label{thm:turing-complete}
SDPF is Turing-complete. Any fair SDPF execution can simulate any SK reduction sequence.
\end{theorem}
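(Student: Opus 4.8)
The plan is to reduce Turing completeness of SDPF to a faithful, step-by-step simulation of the SK combinator calculus, a standard Turing-complete model. Since Lemma~\ref{lemma:one-step} already establishes that a single SK reduction $M \to N$ is realized by a finite sequence of SDPF transformations $\Phi(M)\Longrightarrow^*\Phi(N)$, the remaining work is to (i) lift this one-step correspondence to arbitrary, possibly infinite, reduction sequences, and (ii) discharge the liveness obligation that, under the chaotic network of Axiom~\ref{ax:perturb}, every enabled reduction is eventually performed.

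First I would establish the multi-step simulation by induction on the length of an SK reduction sequence $M_0\to M_1\to\cdots\to M_k$. The base case $k=0$ is immediate. For the inductive step, the hypothesis gives $\Phi(M_0)\Longrightarrow^*\Phi(M_{k-1})$, and Lemma~\ref{lemma:one-step} applied to $M_{k-1}\to M_k$ supplies $\Phi(M_{k-1})\Longrightarrow^*\Phi(M_k)$; composing the two transformation sequences yields $\Phi(M_0)\Longrightarrow^*\Phi(M_k)$. For a divergent SK computation the same construction produces an infinite SDPF execution whose every finite prefix faithfully tracks a finite prefix of the reduction, so the simulation is exact in both the terminating and non-terminating regimes.

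The second and harder part is to argue that an actual \emph{fair} SDPF execution --- an interleaving of sends, receives, and stateless micro-tasks over a reordering, lossy, duplicating channel --- realizes this abstract relation rather than stalling. Here I would exploit two structural facts. The Branch primitive of the translation $\Phi$ detects precisely the redexes of the encoded term, so a reducible SK expression always corresponds to an \emph{enabled} SDPF transformation; by the barrier-less scheduling property (S4) together with the fair-delivery guarantee of Axiom~\ref{ax:perturb}, any such enabled transformation is eventually scheduled and its constituent messages eventually delivered, so no redex is starved. The reordering and duplication permitted by the channel are absorbed by the idempotent, commutative merge of Axiom~\ref{ax:merge}: the shared rid attached to the injected tiles makes each Copy/Combine rewrite atomic and replay-safe, so a duplicated or out-of-order delivery cannot corrupt the encoded term. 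I would then invoke the Strong Eventual Consistency guarantee (Theorem~\ref{thm:consistency}) to conclude that the encoded multiset of tiles converges to a unique limit independent of scheduling, which by confluence (Church--Rosser) of SK must coincide with $\Phi$ applied to the SK normal form whenever one exists.

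The main obstacle I anticipate is precisely the bridge between the abstract rewrite relation and the concrete asynchronous trace: I must set up a bisimulation-style relation between SK terms and stable SDPF configurations and show it is preserved under every admissible network event --- not only the intended transformation firings but also spurious duplications, reorderings, and retransmissions. Establishing that the tombstone-plus-rid mechanism makes each simulated step both atomic (no partially rewritten redex is ever observable as a stable configuration) and idempotent (no double application occurs despite duplication) is the crux; once that invariant is in hand, fairness delivers progress and Theorem~\ref{thm:consistency} delivers determinacy, completing the proof that SDPF inherits the full computational universality of SK.
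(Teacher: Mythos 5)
Your proposal follows essentially the same route as the paper's proof: induction on the length of the SK reduction sequence using Lemma~\ref{lemma:one-step} for each step, Axiom~\ref{ax:perturb} fairness to guarantee that each enabled Branch/Combine is eventually scheduled, and Theorem~\ref{thm:consistency} to ensure each simulated step converges to the unique correct intermediate state $\Phi(M_{i+1})$. Your additional remarks on Church--Rosser confluence and the need for an explicit bisimulation invariant between SK terms and stable tile configurations actually identify technical obligations the paper's own proof leaves implicit, but they do not change the argument's structure.
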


\begin{proof}[Proof Sketch]
This proof elevates the one-step simulation of Lemma \ref{lemma:one-step} to the simulation of any full computation. Lemma \ref{lemma:one-step} guarantees the ability to execute each step correctly, while the fairness assumption in Axiom \ref{ax:perturb} guarantees that any ready step will eventually be executed. This combination ensures that SDPF can simulate any SK reduction sequence. Since SK calculus is Turing-complete, SDPF is as well. The detailed formal proof is provided in  \ref{sec:appendix-thm65}.
\end{proof}

\subsection{Minimality}

\begin{theorem}[Necessity of Minimal Metadata Set]
\label{thm:minimality}
The absence of any metadata field in (id, target, op, pc/next, rid) will either break the feasibility under the trilemma or violate the properties of Theorem \ref{thm:consistency} or \ref{thm:turing-complete}.
\end{theorem}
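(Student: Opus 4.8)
The plan is to prove minimality by a field-by-field elimination argument: for each of the five metadata fields, I would assume a variant paradigm that omits exactly that field, and then exhibit a concrete adversarial execution (one permitted by Axiom~\ref{ax:perturb}) on which the truncated metadata forces a violation of either a trilemma goal (routed through Corollary~\ref{cor:impossibility}) or one of Theorem~\ref{thm:consistency} / Theorem~\ref{thm:turing-complete}. Because the SDPF normal form is characterized by the interlocking properties S1--S5 together with convergence and completeness, each field is tied to a specific one of these obligations, and I expect to show that removing it severs precisely that obligation. I would organize the argument as five short lemmas, one per field, each establishing a strict failure of a property that the full metadata set provably supports.

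Concretely, the field-to-obligation map I would establish is the following. Removing \emph{rid} destroys idempotency: since Axiom~\ref{ax:perturb} permits duplication, the merge $\oplus$ can no longer distinguish a fresh semantic contribution from a redelivered one, so duplicates are double-counted and the limit guaranteed by Theorem~\ref{thm:consistency} becomes arrival-order dependent. Removing \emph{target} leaves an increment with no addressable component $\mathcal{S}_t$ of the product semilattice to merge into; the only order-independent repair is to broadcast every increment, forcing $Q=\omega(|I|+|O|)$ and hence $R_A=\omega(1)$, violating Axiom~\ref{ax:comm}. Removing \emph{op} forces the computation rule to reside at the node rather than on the data, contradicting the stateless-micro-task property S2 and thereby reopening the robustness dilemma of Proposition~\ref{prop:stateless-idempotent}. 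Removing \emph{pc/next} externalizes control flow to a coordinator, reintroducing the global sequencing ruled out by Proposition~\ref{prop:barrier-less} while also blocking the chaining of rewrite steps required by Lemma~\ref{lemma:one-step}. Finally, removing \emph{id} leaves no stable handle with which Branch/Copy/Combine can reference and tombstone a specific tile, so the atomic find-and-replace of Lemma~\ref{lemma:one-step} cannot be realized and Turing completeness (Theorem~\ref{thm:turing-complete}) fails.

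The main obstacle is \emph{non-redundancy}: I must rule out that an omitted field's role is silently recoverable from the surviving fields, and the most delicate case is the pair $(\text{id},\text{rid})$, which superficially looks interchangeable. The sharp separation comes from a single semantic contribution replayed by S2 across several anonymous executors under Axiom~\ref{ax:perturb}: the physical replicas must carry \emph{distinct} ids, so their rewrites can be tombstoned independently, yet a \emph{common} rid, so that $\oplus$ absorbs them idempotently. I would formalize this by showing that collapsing id and rid into one field forces a choice between losing replay-idempotency (breaking Theorem~\ref{thm:consistency}) and losing per-instance addressability (breaking Lemma~\ref{lemma:one-step}); an analogous but milder argument separates op (what is computed here) from pc/next (where the flow proceeds thereafter). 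Assembling the five lemmas then yields the theorem, since each exhibits a property that the full metadata set satisfies but its truncation provably cannot. The detailed constructions are deferred to the appendix.
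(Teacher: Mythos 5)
Your proposal is correct and follows the same overall decomposition as the paper's Appendix proof: a five-way case split, one field at a time, with each omission routed to a violation of convergence (Theorem~\ref{thm:consistency}), completeness (Theorem~\ref{thm:turing-complete}), or a trilemma goal. The per-field assignments largely coincide (rid~$\to$~idempotency and Theorem~\ref{thm:consistency}; id and pc/next~$\to$~the SK simulation of Lemma~\ref{lemma:one-step} and hence Theorem~\ref{thm:turing-complete}; op~$\to$~correctness under statelessness). Two of your routes differ in detail from the paper's: for \emph{target} the paper argues the merge $A \oplus \delta$ is simply ill-defined without an addressable component of the product semilattice, so convergence fails, whereas you instead repair the omission by broadcasting and then derive $R_A=\omega(1)$ against Axiom~\ref{ax:comm}; for \emph{pc/next} you additionally invoke the coordinator/barrier argument of Proposition~\ref{prop:barrier-less}, while the paper only argues loss of dynamic graph rewiring. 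Both variants land inside the disjunction the theorem asserts, so either is acceptable. The genuine value you add is the explicit \emph{non-redundancy} step — showing that an omitted field's role cannot be silently reconstructed from the survivors, with the $(\text{id},\text{rid})$ separation argued via replicated replay under Axiom~\ref{ax:perturb} (distinct ids for independent tombstoning, a shared rid for idempotent absorption). The paper's proof implicitly assumes that deleting a field loses its information outright and never rules out such collapsing; your lemma closes that gap and is worth retaining if you write this up in full.
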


\begin{proof}[Proof Sketch]
This proof establishes that the five metadata fields are the minimal set required to uphold the paradigms guarantees. The strategy is a case-by-case proof by contradiction. We show that removing any single field would fatally undermine a previously established theoretical pillar: absence of id or pc/next breaks completeness; absence of target or op breaks correctness and convergence; and absence of rid breaks idempotency, forcing the system into a high-cost mode that violates the trilemma. Since each field is indispensable, the set is minimal. The detailed formal proof is provided in \ref{sec:appendix-thm66}.
\end{proof}

\section{Discussion}

Having established the uniqueness and completeness of the SDPF paradigm, this chapter contextualizes its significance within the broader landscape of computer science. We first position the SDPF uniqueness theorem in relation to classic impossibility results to highlight its complementary, constructive nature. We then explore the profound implications of this work, both for the theory of computation and for the engineering of future physically-realizable systems.

\subsection{A Constructive Duality to Impossibility Theorems}

Classic impossibility results, most notably the FLP and CAP theorems, define the fundamental boundaries of what is achievable in asynchronous distributed systems facing failures. Our work offers a constructive dual. Rather than proving what is impossible to achieve, we prove what is inevitable for a vast and practical class of computations. The paradigms reliance on Axiom A4 (idempotent and commutative merging) is not a limitation, but its core strength. It formally defines the domain where consensus is not a prerequisite for correctness, a category that encompasses a wide range of applications from large scale data aggregation to machine learning training. For this domain, while impossibility theorems erect walls, our uniqueness theorem illuminates the single, mandatory path for scalable and robust computation.

\subsection{SDPF as a Foundational Model of Computation}

The introduction posed a question: what form must computation take under physical axioms? The SDPF paradigm is the answer. It is not merely a specialized dataflow model; it represents a class of computation where the laws of physics are promoted to first class citizens in the model itself. This physically grounded approach, which derives a paradigms structure from axiomatic constraints rather than prescribing it, suggests a new way to reason about and construct computational systems in other resource constrained domains, from the Internet of Things to biological computing.

\subsection{Manifestation in Physically-Realizable Systems}

The theoretical framework developed in this paper finds its direct manifestation in the design of physically-realizable systems. The principles of SDPF do not merely align well with programmable data planes; rather, they represent a formalization of the necessary design patterns that emerge under the physical constraints of environments like INC. The data-centric model, with self-describing packets and stateless processing functions, is a direct reflection of the architectural paradigms found in systems like P4 \cite{Bosshart2014P4}. Our work provides the formal justification for why these architectural choices are optimal and, in fact, necessary.\cite{Fang2023GRID, Liu2023NetReduce, Khashab2024SwitchVM}

\section{Conclusion}

This paper began with a fundamental question at the intersection of theoretical computer science and physical reality: what essential form must distributed computation take when subjected to inviolable physical axioms? We have provided a definitive, constructive answer. By formalizing a set of such axioms, we have proven that any paradigm capable of operating within this regime must converge to a single normal form: Self-Describing Parallel Flows (SDPF). This conclusion is not merely a design pattern, but a derived theoretical inevitability, supported by a complete set of formal proofs establishing the paradigms uniqueness, its guaranteed convergence, its universal computational power as a Turing-complete model, and the minimality of its design.

The significance of this result transcends the specific domain of In-Network Computing that motivated it. It suggests the emergence of a new field of inquiry: the study of physically grounded computation. Our impossibility dual (Corollary \ref{cor:impossibility}) serves as this fields first rigorous constraint: it establishes that any deviation from the discovered normal form is not a design choice, but a direct violation of the underlying physical axioms. Ultimately, this work provides a new lens for theoretical computer science, one that views physical constraints not as implementation details to be abstracted away, but as first-class axioms from which the very structure of computation can be derived.

In this work, we focus on deterministic guarantees for the SDPF normal form. The exploration of probabilistic data structures (e.g., Bloom filters) to further trade-off strict correctness for memory efficiency in extreme scale is a promising direction for future research.

\Acknowledgements{This work was supported by the National Key Research and Development Program of China under Grant No. 2024YFE0200300.}

\bibliographystyle{scis}
\bibliography{references} 

\begin{appendix}

\section{Formal Proof of Proposition 5.1}
\label{sec:appendix-prop51}
\begin{proof}
    \textbf{Statement}: Let the set of tiles be $\{\text{tile}_k\}$ with consumption counts $U_k$. If an algorithm allows some consumers to independently fetch the same tile from the source (without in-network reuse), then
    $$R_A \;\ge\; 1 + \frac{\sum_k (U_k-1)\cdot|\text{tile}_k|}{|I|+|O|}.$$
    In particular, when some $U_k=\Omega(n)$, then $R_A=\Omega(n)$.

    \vspace{1em}
    \begin{enumerate}
        \item Let $Q$ be the total communication volume. Let $I$ be the initial input set and $O$ be the final output set.
        \item Any correct paradigm must read all inputs and write all outputs, establishing a baseline communication lower bound $Q_{base} = |I| + |O|$.
        \item Consider an input tile $\text{tile}_k \in I$ used by $U_k \ge 1$ consumers who independently fetch it from the source.
        \item The first read of $\text{tile}_k$, costing $|\text{tile}_k|$, is necessary and included in the $|I|$ part of $Q_{base}$.
        \item The subsequent $U_k-1$ reads are redundant, contributing an extra overhead.
        \item The total redundant communication is $Q_{redundant} = \sum_k (U_k-1) \cdot |\text{tile}_k|$.
        \item The total communication volume $Q$ is thus bounded by:
        $$Q \ge Q_{base} + Q_{redundant} = (|I|+|O|) + \sum_k (U_k-1)\cdot|\text{tile}_k|$$
        \item From the definition $R_A := Q / (|I|+|O|)$, we derive:
        $$ R_A \ge \frac{(|I|+|O|) + \sum_k (U_k-1)\cdot|\text{tile}_k|}{|I|+|O|} = 1 + \frac{\sum_k (U_k-1)\cdot|\text{tile}_k|}{|I|+|O|} $$
        \item This proves the first part. If $U_k = \Omega(n)$, then the additive term is $\Omega(n)$, leading to $R_A = \Omega(n)$.
        \item Per Axiom \ref{ax:comm}, an optimal algorithm must have $R_A=O(1)$. This is contradicted unless the redundant term is negligible. Thus, single-read reuse is necessary. 
    \end{enumerate}
\end{proof}

\section{Formal Proof of Proposition 5.2}
\label{sec:appendix-prop52}
\begin{proof}
    \textbf{Statement}: Under A2, if tasks are stateful or merges non-idempotent, any correct algorithm must either (C1) use persistent writes, making $R_A=O(1)$ impossible, or (C2) use global ordering, hitting the scalability limit of Prop 5.3.

    \vspace{1em}
    We prove by analyzing violations of S2 or S3.
    
    \textbf{Case 1: Violation of Idempotent Merge (S3)}
    
    If the merge operator $\oplus$ is not idempotent for a repeated rid, the system must enforce exactly-once semantics to ensure correctness under network repetitions (Axiom \ref{ax:perturb}).
    
    \begin{itemize}
        \item \textbf{Strategy 1a (State-based Filtering)}: A receiving node must maintain a persistent set $H$ of processed rids. For each new increment, it checks for existence in $H$. If not present, it processes the increment and adds the rid to $H$. Under Axiom \ref{ax:perturb}, node failures require $H$ to be in slow storage. Each operation thus incurs a persistent write of size $\ell > 0$. For $N_{\text{ops}}$ operations, the total communication $Q \ge |I|+|O| + N_{\text{ops}}\cdot\ell$. If $N_{\text{ops}}$ scales with the problem size, $R_A$ grows unboundedly, violating Axiom \ref{ax:comm}. This is conclusion (C1).
        
        \item \textbf{Strategy 1b (Order-based Guarantee)}: The system uses a global coordination service (e.g., consensus) to totally order all messages, naturally preventing duplicates. Such a mechanism is a global synchronization barrier. By Proposition \ref{prop:barrier-less}, this leads to sub-linear scalability under heavy-tailed latency, failing the $S(P)=\Theta(P)$ goal. This is conclusion (C2).
    \end{itemize}
    
    \textbf{Case 2: Violation of Stateless Micro-tasks (S2)}
    
    If a task is stateful (modifies shared state $S$), node failures (Axiom \ref{ax:perturb}) can corrupt $S$. To recover, mechanisms like Checkpointing or Write-Ahead Logging are required. Both involve frequent writes to slow storage, with communication overhead proportional to the number of state-modifying operations ($N_{\text{ops}}$). This again makes $R_A=O(1)$ impossible, violating Axiom \ref{ax:comm}. This is conclusion (C1).
    
    \textbf{Conclusion}: Any violation of S2 or S3 forces a choice that violates either Axiom \ref{ax:comm} or the linear scalability goal. Thus, S2 and S3 are necessary. 
\end{proof}

\section{Formal Proof of Proposition 5.3}
\label{sec:appendix-prop53}
\begin{proof}
    \textbf{Statement}: For a barrier-synchronized round with i.i.d. heavy-tailed delays $\tau_i$, the expected completion time $E[T_{\mathrm{round}}]=\mathbb{E}[\max_i \tau_i]$ grows with $|P|$, making $S(P)=\Theta(P)$ unattainable.
    \vspace{1em}
    \begin{enumerate}
        \item \textbf{Model Setup}: Let $\tau_1, \dots, \tau_{|P|}$ be i.i.d. random variables for task completion times with CDF $F(t)$. The round time is $T_{\mathrm{round}} = \max(\tau_1, \dots, \tau_{|P|})$. Throughput $S(P) \propto \frac{|P|}{E[T_{\mathrm{round}}]}$. Linear scalability requires $E[T_{\mathrm{round}}]$ to be a constant or a very slowly growing function of $|P|$, i.e., $E[T_{\mathrm{round}}]=o(|P|)$.
        
        \item \textbf{Order Statistics for Heavy-Tailed Distributions}: As per Axiom \ref{ax:perturb}, the distribution $F(t)$ is heavy-tailed. A key property of heavy-tailed distributions (specifically, the subexponential class) is that the maximum of a large number of variables is dominated by a single extreme event. According to extreme value theory, for such distributions, the expected maximum grows with the number of variables, often polynomially. For example, for any distribution with a regularly varying tail $\Pr[\tau > t] \sim t^{-\alpha}$ with $\alpha > 1$, it is a known result that:
        $$E[T_{\mathrm{round}}] = \Theta(|P|^{1/\alpha})$$
        
        \item \textbf{Impact on Scalability}: Substituting this into the throughput formula gives:
        $$S(P) \propto \frac{|P|}{|P|^{1/\alpha}} = |P|^{1 - 1/\alpha}$$
        Since $\alpha > 1$, the exponent $1 - 1/\alpha$ is strictly less than 1.
        
        \item \textbf{Conclusion}: Throughput $S(P)$ is a strictly sub-linear function, $S(P) = o(P)$, which contradicts the linear scalability goal. Therefore, removing the synchronization barrier (S4) is necessary. This is in stark contrast to light-tailed distributions (e.g., Gaussian), where the expected maximum grows much more slowly (typically logarithmically with $|P|$), making barriers less catastrophically damaging but still suboptimal.
    \end{enumerate}
\end{proof}

\section{Formal Proof of Proposition 5.4}
\label{sec:appendix-prop54}
\begin{proof}
    \textbf{Statement}: If an algorithm hoards data to reduce communication, then $W_{\max} = \Omega(n^{\beta})$ for some $\beta>0$, violating Axiom \ref{ax:memory}.
    
    \vspace{1em}
    \begin{enumerate}
        \item \textbf{Premise}: An algorithm attempts to satisfy Axiom \ref{ax:comm} (communication lower bound) and Axiom \ref{ax:memory} (bounded memory) simultaneously, but violates S5 by hoarding data.
        
        \item \textbf{I/O Complexity Model}: The relationship between memory ($W$) and I/O ($Q$) is formalized by I/O complexity theory (e.g., the red-blue pebble game). For many computations, a trade-off exists, often expressed as $Q \cdot W^{\gamma-1} = \Omega(\text{Work})$. For dense matrix multiplication, this yields the well-known lower bound $Q = \Omega(n^3/W_p)$.
        
        \item \textbf{Deriving the Contradiction}: The algorithms goal is to approach the communication lower bound of Axiom \ref{ax:comm}. To minimize $Q$ according to the trade-off formula, the memory $W_p$ used for data reuse must be maximized. This memory term $W_p$ represents the size of the live dataset that must be kept in fast memory to achieve maximal data reuse. I/O complexity theory proves that this required live set size is not constant but scales with the problem size $n$. To achieve optimal communication, the required memory for the live set is $\Omega(n^\beta)$ for some $\beta > 0$.
        
        \item The peak memory usage $W_{\max}$ must be at least large enough to hold this live set. Therefore:
        $$W_{\max} = \Omega(n^\beta)$$
        
        \item \textbf{Conclusion}: The result $W_{\max} = \Omega(n^\beta)$ directly contradicts Axiom \ref{ax:memory}, which mandates $W_p=O(1)$ (memory independent of problem size $n$). Therefore, an algorithm cannot simultaneously hoard data to satisfy Axiom \ref{ax:comm} and maintain bounded memory to satisfy Axiom \ref{ax:memory}. The only strategy that reconciles these constraints is the sliding window (S5). 
    \end{enumerate}
\end{proof}

\section{Formal Proof of Theorem 5.5}
\label{sec:appendix-thm55}
\begin{proof}
    \textbf{Statement}: Under the axiomatic system defined by Axioms A1--A4, any optimal execution paradigm is metric-equivalent to the SDPF normal form. SDPF is, therefore, the unique normal form for computation within this physically-constrained regime.
    \vspace{1em}
    
    To formally prove this, we first define the space of paradigms and the notion of equivalence.

    \begin{definition}[Computational Paradigm]
    A computational paradigm $\mathcal{E}$ is a tuple $(\mathcal{S}, \mathcal{M}, \Delta, \Pi)$, where:
    \begin{itemize}
        \item $\mathcal{S}$ is a set of possible states for each node.
        \item $\mathcal{M}$ is a set of message types that can be exchanged.
        \item $\Delta: \mathcal{S} \times \mathcal{M}_{\text{in}} \to \mathcal{S} \times \mathcal{M}_{\text{out}}$ is a state transition function.
        \item $\Pi$ is a set of scheduling policies.
    \end{itemize}
    An optimal paradigm is one that satisfies Axioms A1-A4.
    \end{definition}
    
    \begin{definition}[Metric-Equivalence]
    We say two paradigms $\mathcal{E}$ and $\mathcal{E}$ are metric-equivalent, written $\mathcal{E} \approx \mathcal{E}$, if there exist forward and backward simulations between them such that their cost metrics ($R_A, W_{\max}, S(P)$) differ by at most a constant factor independent of the problem size $n$ and system scale $|P|$.
    \end{definition}

    \textbf{Proof Strategy}: We prove by constructive reduction. Given an arbitrary optimal paradigm $\mathcal{E}_0 = (\mathcal{S}_0, \mathcal{M}_0, \Delta_0, \Pi_0)$, we apply a sequence of transformation functions $T = T_4 \circ T_3 \circ T_2 \circ T_1$. We will show that for each step, $\mathcal{E}_{i-1} \approx \mathcal{E}_i$, and that the final paradigm $\mathcal{E}_4$ is in the SDPF normal form. By the transitivity of $\approx$, this proves $\mathcal{E}_0 \approx \mathcal{E}_4$.
    
    \subsubsection*{Transformation T1 (Reuse-ification $\to$ S1)}
    The function $T_1$ transforms $\mathcal{E}_0$ to $\mathcal{E}_1$ by modifying its scheduling policy set $\Pi_0$ to $\Pi_1$, which replaces any multi-fetch operations from slow storage with a single-fetch-then-multicast policy.
    \begin{lemma}[Performance Preservation for T1]
    $T_1(\mathcal{E}_0) = \mathcal{E}_1$ implies $\mathcal{E}_0 \approx \mathcal{E}_1$. Specifically, $R_A(\mathcal{E}_1) \le R_A(\mathcal{E}_0)$.
    \end{lemma}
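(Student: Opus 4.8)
The plan is to verify the two clauses of the definition of metric-equivalence in turn: first exhibit forward and backward simulations witnessing $\mathcal{E}_0 \approx \mathcal{E}_1$, then bound each of the three cost metrics, with the inequality $R_A(\mathcal{E}_1) \le R_A(\mathcal{E}_0)$ as the load-bearing step. Since $T_1$ rewrites only the scheduling policy $\Pi_0 \mapsto \Pi_1$ and leaves the state-transition function $\Delta_0$ and the set of tile consumers unchanged, every consumer in $\mathcal{E}_1$ receives bit-for-bit the same tile payload it would have independently fetched in $\mathcal{E}_0$. I would therefore take the forward simulation to map each independent fetch event of $\mathcal{E}_0$ to the matching delivery along the multicast in $\mathcal{E}_1$, and the backward simulation to be the identity on all downstream computation steps; this shows the two paradigms produce identical outputs on every execution trace, so correctness is preserved and only the metric comparison remains.

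For the $R_A$ bound I would invoke Proposition~\ref{prop:single-read}: in $\mathcal{E}_0$ each $\text{tile}_k$ is drawn from slow storage $U_k$ times, so its communication volume carries the full redundant term, $Q(\mathcal{E}_0) \ge (|I|+|O|) + \sum_k (U_k-1)\,|\text{tile}_k|$. Under $T_1$ each tile is read from slow storage exactly once and its remaining $U_k-1$ copies are produced by in-network replication, so $Q(\mathcal{E}_1) = (|I|+|O|) + Q_{\mathrm{mc}}$, where $Q_{\mathrm{mc}}$ denotes the total replication volume. Dividing through by the shared denominator $|I|+|O|$, the claim reduces to the estimate $Q_{\mathrm{mc}} \le \sum_k (U_k-1)\,|\text{tile}_k|$; since the left side is then a lower bound on $R_A(\mathcal{E}_0)$ offset by $1$, this estimate yields $R_A(\mathcal{E}_1) \le R_A(\mathcal{E}_0)$. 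The estimate holds for a replication primitive that distributes $\text{tile}_k$ to its $U_k$ consumers along a spanning tree rooted at a consumer, which uses exactly $U_k-1$ peer-to-peer transfers of size $|\text{tile}_k|$ and thus matches the eliminated redundant reads hop-for-hop.

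The remaining two metrics I would show are preserved up to constants. For $W_{\max}$, the only new state $T_1$ introduces is a transient tile buffer at the replicating node; because a tile is by construction an $O(1)$ unit independent of $n$ at the granularity of Axiom~\ref{ax:memory}, we obtain $W_{\max}(\mathcal{E}_1) \le W_{\max}(\mathcal{E}_0) + O(1)$, a constant-factor change. For $S(P)$, in-network replication is barrier-free and pipelinable, so it introduces no global synchronization point and leaves the steady-state throughput's scaling order in $|P|$ intact, up to the constant fan-out factor of the multicast primitive. Together with the $R_A$ bound, these three estimates establish $\mathcal{E}_0 \approx \mathcal{E}_1$.

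The hard part will be the accounting in the $R_A$ step, specifically ensuring $Q_{\mathrm{mc}}$ is charged once per tree edge rather than once per source–consumer path: a naive point-to-point replication from a single node would reproduce precisely the $\sum_k(U_k-1)|\text{tile}_k|$ cost that $T_1$ is designed to remove, yielding equality rather than the claimed decrease. Making the step rigorous requires committing to a replication primitive whose byte-hop count is provably bounded by the number of consumers and, crucially, verifying that this bound carries no hidden dependence on $n$ or $|P|$, so that the constant-factor guarantee demanded by metric-equivalence is met exactly rather than only asymptotically in a concealed parameter.
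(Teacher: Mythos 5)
Your proof is correct and follows essentially the same route as the paper's own argument: invoke Proposition~\ref{prop:single-read}, observe that $T_1$ only rewrites the fetch policy into single-read-plus-replication so $Q$ (and hence $R_A$) can only decrease or stay equal, and note that $\Delta_0$ is untouched so $W_{\max}$ and $S(P)$ are unaffected. Your version is in fact more careful than the paper's one-paragraph sketch (the explicit simulations and the spanning-tree byte accounting), and the ``hard part'' you flag at the end is not actually an obstacle, since the lemma only claims $R_A(\mathcal{E}_1)\le R_A(\mathcal{E}_0)$ and equality under naive unicast replication would still satisfy it.
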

    \textbf{Proof Sketch}: By Proposition \ref{prop:single-read}, an optimal paradigm must already be equivalent to one that performs single-read reuse to satisfy Axiom \ref{ax:comm}. This transformation only formalizes this requirement. It can only reduce the total communication volume $Q$, thus improving or maintaining $R_A$. State, memory, and other scheduling policies are unaffected, so $W_{\max}$ and $S(P)$ remain unchanged.

    \subsubsection*{Transformation T2 (Stateless-ification $\to$ S2, S3)}
    $T_2$ transforms $\mathcal{E}_1$ to $\mathcal{E}_2$ by altering the paradigms components: the state space $\mathcal{S}_1$ is reduced to a trivial stateless set $\mathcal{S}_2$; the message set $\mathcal{M}_1$ is augmented to $\mathcal{M}_2$ to carry state and a rid; and the transition function $\Delta_1$ is replaced by a pure function $\Delta_2$ that implements the idempotent merge logic of Axiom \ref{ax:merge}.
    \begin{lemma}[Performance Preservation for T2]
    $T_2(\mathcal{E}_1) = \mathcal{E}_2$ implies $\mathcal{E}_1 \approx \mathcal{E}_2$.
    \end{lemma}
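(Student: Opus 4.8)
The plan is to exhibit forward and backward simulations between $\mathcal{E}_1$ and $\mathcal{E}_2$ (as required by Def.~3.2) witnessing that the two paradigms produce identical per-target results on every input, and then to bound each cost metric $R_A$, $W_{\max}$, $S(P)$ separately, showing that each changes by at most a multiplicative constant independent of $n$ and $|P|$. The organizing observation is that $\mathcal{E}_1$ already satisfies Axioms A1--A4 (it is an optimal paradigm, post-T1), so $T_2$ is a \emph{canonicalization} rather than a redesign: its only job is to relocate node-local state into the message stream and to realize idempotency algebraically, and the burden of the proof is to certify that doing so does not degrade metrics that are good by hypothesis.

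First I would define the simulation relation. A configuration of $\mathcal{E}_1$ assigns to each node a local state $s \in \mathcal{S}_1$ and maintains a set of in-flight messages; the matching configuration of $\mathcal{E}_2$ carries the content of $s$ inside the augmented messages of $\mathcal{M}_2$ together with a replay identifier rid, while every node state in $\mathcal{S}_2$ is trivial. For the forward direction, each stateful transition $\Delta_1(s,m)=(s',m_{\mathrm{out}})$ is matched by the pure application $\Delta_2$ on the carried state followed by a merge $\oplus$ at the target; the backward direction is symmetric given the same algebraic identities. Semantic faithfulness then follows from Axiom \ref{ax:merge}: commutativity and associativity make the accumulated target value independent of the order in which increments are applied, and idempotency for a shared rid makes it independent of duplication, so the merged result coincides with the value the stateful $\Delta_1$ would have computed. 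The crucial point to stress here is that idempotency is discharged \emph{algebraically} through $\oplus$ on a shared rid, not through a persistent set of observed identifiers; this is exactly what keeps the construction clear of the per-operation persistent writes (conclusion C1) flagged in Proposition \ref{prop:stateless-idempotent}.

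Next I would bound the three metrics. For $R_A$, the only new communication $T_2$ introduces is the externalized state plus the rid attached to each message. Because $\mathcal{E}_1$ already obeys Axiom \ref{ax:memory}, the per-node live state it relocates is $O(1)$, and rid is of constant width at tile granularity; since every message already carries an increment $\delta$ of size $\Omega(1)$, the message size grows by at most a constant factor, so $Q_2 \le c\,Q_1$ for a constant $c$ and $R_A$ is preserved up to that factor. For $W_{\max}$, statelessness removes the mutable shared state that $\mathcal{E}_1$ held, and the algebraic realization of idempotency means no seen-set is stored, so $W_{\max}(\mathcal{E}_2) \le W_{\max}(\mathcal{E}_1)$; the metric can only improve. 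For $S(P)$, $T_2$ touches neither the scheduling policy set nor any synchronization structure, so it introduces no barrier and falls nowhere near the tail-latency lower bound of Proposition \ref{prop:barrier-less}; throughput is therefore unchanged. Combining the three bounds gives $\mathcal{E}_1 \approx \mathcal{E}_2$.

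The main obstacle I anticipate is controlling the communication blow-up of state externalization in full generality: a priori a stateful node could hold state whose size grows with $n$, and naively shipping it would destroy $R_A$. The resolution is precisely to invoke Axiom \ref{ax:memory} on the \emph{optimal} source paradigm $\mathcal{E}_1$ to cap that per-node state at $O(1)$, and Axiom \ref{ax:merge} to guarantee that what is carried are bounded semilattice contributions rather than unstructured global state. Getting this coupling right---so that the carried payload is provably $O(1)$ per message and the idempotency bookkeeping is free---is the delicate core of the argument; the metric bounds themselves are then routine constant-factor accounting.
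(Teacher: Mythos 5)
Your proposal is correct and follows essentially the same route as the paper's proof: both arguments rest on the observation that $T_2$ trades the high-cost fault tolerance forced on stateful paradigms (Proposition~\ref{prop:stateless-idempotent}) for a constant amount of per-message metadata, which preserves $R_A$ and $S(P)$ within constant factors independent of $n$ and $|P|$. Your version is somewhat more careful than the paper's sketch in one useful place---explicitly invoking Axiom~\ref{ax:memory} to cap the size of the externalized per-node state at $O(1)$, which is exactly what justifies the paper's otherwise unargued claim that the added payload is constant per message.
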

    \textbf{Proof Sketch}: By Proposition \ref{prop:stateless-idempotent}, a stateful paradigm must use high-cost fault tolerance (e.g., checkpointing with cost proportional to $N_{ops}$) which violates Axiom \ref{ax:comm} or scalability. $T_2$ replaces this with a mechanism whose overhead is a constant amount of metadata per message. This added communication is amortized and independent of system scale $|P|$, thus preserving $R_A$ within a constant factor. The logic for processing rids adds a constant-time overhead per message, preserving $S(P)$.
    
    \subsubsection*{Transformation T3 (Barrier Removal $\to$ S4)}
    $T_3$ transforms $\mathcal{E}_2$ to $\mathcal{E}_3$ by modifying the scheduling policy set $\Pi_2$ to $\Pi_3$, removing any policies that require global synchronization.
    \begin{lemma}[Performance Preservation for T3]
    $T_3(\mathcal{E}_2) = \mathcal{E}_3$ implies $\mathcal{E}_2 \approx \mathcal{E}_3$. Specifically, $S(P, \mathcal{E}_3) \ge S(P, \mathcal{E}_2)$.
    \end{lemma}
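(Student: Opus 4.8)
The plan is to prove the two claims separately, exploiting the fact that $T_3$ alters only the scheduling component, mapping $\Pi_2 \to \Pi_3$ while leaving the state space, message set, and transition function $(\mathcal{S}_2,\mathcal{M}_2,\Delta_2)$ untouched. The starting observation is that, because $T_2$ has already rendered the paradigm stateless and idempotent (S2, S3), the computed result is order- and multiplicity-independent: by the Strong Eventual Consistency guarantee (Theorem \ref{thm:consistency}), every fair execution of $\mathcal{E}_2$ and of $\mathcal{E}_3$ converges to the same semilattice limit. This furnishes the forward and backward simulations required by the metric-equivalence definition almost for free: any barrier-synchronized trace of $\mathcal{E}_2$ is itself an admissible trace of $\mathcal{E}_3$ (the barrier-free scheduler merely lifts a constraint), and conversely any barrier-free trace can be serialized into a barrier-respecting one delivering the identical limit. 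I would formalize the simulation as the identity map on tiles, verifying only that it preserves reachable states up to $\oplus$-equivalence.

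For the scalability claim $S(P,\mathcal{E}_3) \ge S(P,\mathcal{E}_2)$, I would argue that the family of admissible schedules of $\mathcal{E}_3$ contains that of $\mathcal{E}_2$: removing a global barrier can only enlarge the set of legal task orderings, so for any workload the steady-state throughput cannot decrease. The quantitative separation follows from Proposition \ref{prop:barrier-less}: under the heavy-tailed latencies of Axiom \ref{ax:perturb}, the barrier forces $E[T_{\mathrm{round}}]=\Theta(|P|^{1/\alpha})$ and hence $S(P,\mathcal{E}_2)=\Theta(|P|^{1-1/\alpha})=o(P)$, whereas barrier-free execution lets each node drain independent work without waiting on the max-order statistic, so $S(P,\mathcal{E}_3)$ is not throttled by the slowest straggler. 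This yields $S(P,\mathcal{E}_3)\ge S(P,\mathcal{E}_2)$, with strict improvement whenever the barrier was active.

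Preservation of the remaining two metrics is then comparatively direct. The communication volume $Q$ is a function of the message set induced by $\Delta_2$ and the data-flow structure alone; a pure rescheduling neither injects nor suppresses any message, so $Q$, and therefore $R_A$, is preserved exactly, i.e. $R_A(\mathcal{E}_3)=R_A(\mathcal{E}_2)$. For peak memory I would note that $W_{\max}$ is a \emph{per-node} maximum over time, whereas a global barrier is a \emph{cross-node} coordination construct; deleting it changes the global temporal alignment of nodes but does not introduce additional live data into any single node's buffer, so the per-node footprint is unchanged up to a constant factor independent of $n$ and $|P|$.

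The main obstacle I anticipate is precisely this memory argument, because one might worry that unsynchronized producers race ahead and cause unbounded accumulation at a downstream node, which would break the $W_{\max}$ bound. The resolution I would pursue is to observe that barrier removal is a \emph{purely temporal} relaxation: $\mathcal{E}_2$ carries whatever local admission and buffering discipline it had into $\mathcal{E}_3$ unchanged, and per-node occupancy is governed by that local discipline, namely the sliding-window mechanism subsequently installed by $T_4$ (S5), rather than by the global barrier. I must therefore show that $T_3$ and $T_4$ compose without interference, i.e. that removing the barrier does not defeat the local windowing that $T_4$ enforces, so that the footprint remains $O(W_{\max}(\mathcal{E}_2))$ throughout the reduction chain and the metric-equivalence $\mathcal{E}_2 \approx \mathcal{E}_3$ holds in all three coordinates.
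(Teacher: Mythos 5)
Your proposal is correct and follows essentially the same route as the paper's own proof sketch: invoke Proposition~\ref{prop:barrier-less} to conclude that removing barrier-based rules from $\Pi_2$ can only improve or maintain $S(P)$, and observe that a pure rescheduling leaves the message set (hence $R_A$) and the per-node data (hence $W_{\max}$) untouched. You are in fact more careful than the paper on the memory coordinate --- the paper simply asserts ``no impact on memory,'' whereas you correctly flag that unsynchronized producers could in principle inflate a downstream node's buffer and that the bound must ultimately be discharged by the retained local admission discipline and its composition with $T_4$; that is a genuine proof obligation the paper leaves implicit.
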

    \textbf{Proof Sketch}: By Proposition \ref{prop:barrier-less}, an optimal paradigm must be barrier-less to achieve linear scalability under Axiom \ref{ax:perturb}. This transformation formalizes this by removing barrier-based rules from $\Pi_2$. This directly improves or maintains scalability $S(P)$ and has no impact on communication ($R_A$) or memory ($W_{\max}$).
    
    \subsubsection*{Transformation T4 (Windowing $\to$ S5)}
    $T_4$ transforms $\mathcal{E}_3$ to $\mathcal{E}_4$ by adding a constraint to its state transition function $\Delta_3$ to enforce a constant-size buffer for in-flight data.
    
    \begin{lemma}[Performance Preservation for T4]
    $T_4(\mathcal{E}_3) = \mathcal{E}_4$ implies $\mathcal{E}_3 \approx \mathcal{E}_4$. Specifically, $W_{\max}(\mathcal{E}_4)=O(1)$ and $R_A(\mathcal{E}_4)$ is preserved within a constant factor.
    \end{lemma}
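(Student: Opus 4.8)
The plan is to exhibit a pair of mutually inverse simulations between $\mathcal{E}_3$ and $\mathcal{E}_4$ and then bound each of the three cost metrics across the windowing transformation. First I would set up the forward simulation: $\mathcal{E}_4$ executes precisely the transition function $\Delta_3$ of $\mathcal{E}_3$, except that the admissible node states are restricted to those holding at most a fixed window $w$ of active tiles, where $w$ is a constant chosen independent of $n$ and $|P|$. The backward simulation is the identity on traces, since every legal $\mathcal{E}_4$ state is \emph{a fortiori} a legal $\mathcal{E}_3$ state. Establishing that this restriction is non-blocking—that any step schedulable in $\mathcal{E}_3$ remains schedulable in $\mathcal{E}_4$—is the obligation that upgrades the two paradigms from merely one-way comparable to trace-equivalent, and hence metric-equivalent.

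The memory bound is then immediate from the construction: each node holds at most $w$ tiles, so $W_{\max}(\mathcal{E}_4) \le w \cdot \max_k |\text{tile}_k| = O(1)$, discharging the Axiom~\ref{ax:memory} obligation. The real content lies in showing $R_A$ is preserved within a constant factor, which I would argue by ruling out the only two ways windowing could inflate communication. A discarded tile can never need re-fetching from slow storage, because single-read reuse (S1, secured by $T_1$) routes all repeated consumption through network-level replication rather than local hoarding; and a discarded tile can never need retention for correctness, because statelessness and idempotent merge (S2/S3, secured by $T_2$) guarantee that once a tile has contributed its increment $\delta$ under its rid, its future presence is immaterial to the limit state. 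Consequently the use-and-discard pipeline emits exactly the same set of inter-node messages as $\mathcal{E}_3$, and the only additive overhead is the constant-size window-management metadata, which is $O(1)$ per message and independent of $|P|$; thus $R_A(\mathcal{E}_4) = \Theta(R_A(\mathcal{E}_3))$. Scalability $S(P)$ is untouched because the window constraint is purely node-local and introduces no synchronization, so the barrier-less property inherited from $T_3$ survives intact.

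The main obstacle I anticipate is precisely the non-blocking claim, which sits in apparent tension with the I/O lower bound invoked in Proposition~\ref{prop:sliding-window}: that argument shows that a bounded live set seems to preclude the large working set which communication-optimal reuse appears to demand. The resolution I would develop is that this tension is dissolved by Axiom~\ref{ax:merge}: for the idempotent, commutative, associative merge class, maximal reuse does not require co-residence of a polynomially large live set, but only of the bounded frontier of currently-pairing or currently-accumulating contributions. I would make this precise by analyzing the dependency structure carried in the pc/next metadata, showing that at any instant only $O(1)$ tiles per target are simultaneously required to make progress, so that a window of that fixed size is provably sufficient and the restriction is genuinely non-blocking. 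With the non-blocking property secured, trace-equivalence and the three metric bounds follow at once, yielding $\mathcal{E}_3 \approx \mathcal{E}_4$ and completing the final link in the reduction chain $\mathcal{E}_0 \approx \mathcal{E}_4$.
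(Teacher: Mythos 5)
Your proposal reaches the same conclusion but by a genuinely different route from the paper's. The paper's argument concedes that enforcing $W_{\max}=O(1)$ may \emph{increase} communication, and then invokes the memory--communication trade-off of I/O complexity theory to assert that the increase is bounded by a constant factor $c$, i.e., $Q(\mathcal{E}_4) \le c \cdot Q(\mathcal{E}_3) + O(|I|+|O|)$; it does not exhibit simulations or explain where $c$ comes from. You instead build an explicit forward/backward simulation pair and argue that \emph{no} increase in inter-node messages occurs at all: discarded tiles need neither re-fetching (by S1) nor retention (by S2/S3), so the message set is identical and only $O(1)$ metadata per message is added. Your route is structurally cleaner and, if it goes through, strictly stronger. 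However, it hinges entirely on the non-blocking claim --- that a constant window never forces a stall or a re-fetch --- which you correctly identify as the main obstacle and resolve only by a promissory note that Axiom~\ref{ax:merge} reduces the required live set to an $O(1)$ frontier per target. That is precisely the point where the tension with Proposition~\ref{prop:sliding-window} and the red-blue-pebble bounds resurfaces: for general computations with $O(1)$ memory the communication blow-up is polynomial, not constant, so your argument is sound only because Axiom~\ref{ax:merge} restricts attention to aggregation-like problems whose communication-optimal schedules genuinely need only a bounded frontier. Making that restriction do real work is the missing lemma in your sketch --- but it is equally the unexplained origin of the paper's constant $c$, so both arguments carry the same unpaid debt; yours at least names it.
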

    \textbf{Proof Sketch}: By Proposition \ref{prop:sliding-window}, this is necessary to satisfy Axiom \ref{ax:memory}. Any optimal paradigm must already operate in a way that respects this memory bound. This transformation formalizes the requirement. As established by I/O complexity theory (Axiom \ref{ax:comm}), a trade-off exists between memory and communication. Enforcing $W_{\max}=O(1)$ may lead to a worst-case increase in communication, but this increase is bounded by a constant factor $c$ derived from the trade-off, such that $Q(\mathcal{E}_4) \le c \cdot Q(\mathcal{E}_3) + O(|I|+|O|)$. Thus, $R_A$ is preserved within a constant factor.

    \subsubsection*{Conclusion}
    The final paradigm $\mathcal{E}_4 = T(\mathcal{E}_0)$ satisfies properties S1-S5 by construction and is therefore in the SDPF normal form. Since we have shown that $\mathcal{E}_0 \approx \mathcal{E}_1 \approx \mathcal{E}_2 \approx \mathcal{E}_3 \approx \mathcal{E}_4$, by transitivity, any optimal paradigm $\mathcal{E}_0$ is metric-equivalent to the SDPF normal form. This proves the uniqueness of SDPF. 
\end{proof}

\section{Formal Proof of Theorem 6.2}
\label{sec:appendix-thm62}
\begin{proof}
    \textbf{Statement}: Under SDPF and A2 fairness, any state sequence $(A_0,A_1,\dots)$ is monotonically non-decreasing and converges to a unique limit, independent of message order or repetition.
    
    \vspace{1em}
    \begin{enumerate}
        \item \textbf{Model Setup}: Per Definition 6.1, the global state $A$ is an element of a semilattice $\mathcal{L}$ with partial order $\preceq$ and merge operator $\oplus$, where $A \preceq B \iff A \oplus B = B$. An execution trace is a sequence $A_{i+1} = A_i \oplus \delta_{i+1}$. Let $\Delta^\star$ be the set of unique, eventually-delivered increments. The theoretical limit is $A_{\infty} = \bigoplus_{\delta \in \Delta^\star} \delta$.
        
        \item \textbf{Part 1: Monotonicity}: For any state transition $A_{i+1} = A_i \oplus \delta_{i+1}$, by the definition of a semilattice join, $A_i \preceq (A_i \oplus \delta_{i+1})$. If $\delta_{i+1}$ is a duplicate, by idempotency (Axiom \ref{ax:merge}), $A_{i+1} = A_i$. Thus, for all $i$, $A_i \preceq A_{i+1}$, and the sequence is monotonically non-decreasing.
        
        \item \textbf{Part 2: Existence of an Upper Bound}: Any state $A_k$ is the join of a subset of increments from $\Delta^\star$. By semilattice properties, the join of a subset is $\preceq$ the join of the superset, so $A_k \preceq A_{\infty}$. Thus, the monotonic sequence is bounded above and must converge to a limit. \cite{Tarski1955}
        
        \item \textbf{Part 3: Uniqueness of the Limit}: By fairness (Axiom \ref{ax:perturb}), every increment in $\Delta^\star$ is eventually delivered and processed. The final state is the join of all elements in $\Delta^\star$. By commutativity and associativity (Axiom \ref{ax:merge}), the result is independent of the merge order. By idempotency (Axiom \ref{ax:merge}), the result is independent of repetitions. Therefore, any fair execution trace converges to the same unique limit $A_{\infty}$. 
    \end{enumerate}
\end{proof}

\section{Formal Proof of Lemma 6.4}
\label{sec:appendix-lem64}
\begin{proof}
    \textbf{Statement}: If $M\to N$ is a basic SK reduction, then a finite sequence of SDPF transformations exists such that $\Phi(M)\ \Longrightarrow^*\ \Phi(N)$.
    \medskip

    \subsubsection*{Proof Strategy}
    We prove by construction. An SK reduction is a graph rewrite. We first formalize the SDPF primitives that simulate this rewrite in a distributed, coordination-free manner and then show how they apply to the basis S and K combinators.

    \subsubsection*{Formalizing the Primitives}
    The simulation relies on three logical primitives implemented by stateless executors processing SDPF tiles.
    \begin{itemize}
        \item \textbf{Branch}: This is a local pattern-matching operation. An executor, upon receiving a tile, inspects the local tiles it has access to (e.g., its direct children in the expression graph) to check if they form a reducible expression (a redex). This is a purely local computation requiring no global state. If a redex is found, the executor generates a single \textit{reduction request} tile. This tile contains: (1) a newly generated, globally unique rid for the transaction, (2) the set of ids for the tiles forming the redex, and (3) the rewrite rule to be applied.

        \item \textbf{Copy and Combine}: This is an atomic, rid-based distributed transaction to perform the graph rewrite, triggered by a \textit{reduction request} tile.
        \begin{enumerate}
            \item \textbf{Coordination-free Atomicity}: The rid defines a transaction boundary. The rewrite is realized by generating two sets of tiles, both tagged with the same rid: (1) new tiles representing the final expression graph (e.g., $\Phi(xz(yz))$), and (2) tombstone tiles for all tiles in the original redex (e.g., $\Phi(Sxyz)$). These tiles are then multicast.
            \item \textbf{Distributed Implementation}: Any executor can generate these tiles. The atomicity and correctness are not guaranteed by a single coordinator, but by the properties of the merge operator (Axiom \ref{ax:merge}) at each receiving target.
            \item \textbf{Convergence Guarantee}: The idempotency of the merge operator ensures that for a given rid, once a tombstone for a tile id is applied, any other updates for that id with the same rid are ignored. Similarly, duplicate new tiles are safely merged. This ensures the entire graph rewrite operation is semantically atomic and correct under the asynchronous and unreliable conditions of Axiom \ref{ax:perturb}.
        \end{enumerate}
    \end{itemize}

    \subsubsection*{Simulating K and S Combinators}
    The simulation proceeds as described in the main text, but now grounded in these formalized, coordination-free primitives. The Branch primitive locally identifies the pattern for $\Phi(Kxy)$ or $\Phi(Sxyz)$. It then triggers the Combine transaction, which uses a unique rid to atomically issue new tiles and tombstone old tiles, thus correctly performing the graph rewrite.
    
    \textbf{Conclusion}: As we have constructed a finite and robust SDPF simulation for the basis combinators, any single-step SK reduction can be simulated.
\end{proof}

\section{Formal Proof of Theorem 6.5}
\label{sec:appendix-thm65}
\begin{proof}
    \textbf{Statement}: SDPF is Turing-complete.
    
    \vspace{1em}
    \begin{enumerate}
        \item \textbf{Proof Strategy}: The proof builds upon Lemma \ref{lemma:one-step} and Axiom \ref{ax:perturb}. We show that SDPF can simulate not just a single reduction step, but an entire chain of reductions from an initial expression to its normal form.
        
        \item \textbf{Inductive Argument}:
        \begin{itemize}
            \item \textbf{Base Case}: The initial SK expression $M_0$ is represented by the tile set $\Phi(M_0)$.
            \item \textbf{Inductive Step}: Assume the system has correctly evolved to the state $\Phi(M_i)$. If $M_i$ is not a normal form, it contains at least one reducible expression (redex) $M_i \to M_{i+1}$. By the fairness assumption of Axiom \ref{ax:perturb}, the Branch operation to detect this redex will eventually be scheduled. By Lemma \ref{lemma:one-step}, once detected, there exists a finite SDPF transformation sequence to correctly transition the state from $\Phi(M_i)$ to $\Phi(M_{i+1})$. By Theorem \ref{thm:consistency}, this transition will converge to the unique correct state $\Phi(M_{i+1})$.
        \end{itemize}
        
        \item \textbf{Convergence and Completeness}: By induction, SDPF can simulate any finite SK reduction sequence. If an expression $M$ has a normal form $\mathrm{NF}(M)$, the reduction sequence is finite, and the SDPF execution will converge to the stable state $\Phi(\mathrm{NF}(M))$. Since the SK combinator calculus is Turing-complete, a paradigm that can fully simulate its computation process is also Turing-complete.
        
        \item \textbf{Conclusion}: The SDPF paradigm is Turing-complete. 
    \end{enumerate}
\end{proof}

\section{Formal Proof of Theorem 6.6}
\label{sec:appendix-thm66}
\begin{proof}
    \textbf{Statement}: The absence of any metadata field in (id, target, op, pc/next, rid) will break core system properties.
    
    \vspace{1em}
    We proceed by contradiction for each of the five fields.
    
    \begin{itemize}
        \item \textbf{(a) Necessity of id}: Assume id is absent. The id provides a unique handle for each data tile. The translation function $\Phi$ in the Turing completeness proof (Theorem \ref{thm:turing-complete}) requires a one-to-one mapping between nodes in the SK expression graph and SDPF tiles. Without a unique id, it is impossible to distinguish between two structurally identical but logically distinct sub-expressions (e.g., the two instances of (f x) in (f x) (f x)). This makes a faithful translation impossible, violating the completeness property.
        
        \item \textbf{(b) Necessity of target}: Assume target is absent. The target field specifies to which state an increment $\delta$ should be applied. Without it, the merge operation $A \oplus \delta$ is ill-defined. The semilattice structure that underpins the convergence proof of Theorem \ref{thm:consistency} relies on each target state having its own partial order. Without target, this structure cannot be established, violating the convergence property.
        
        \item \textbf{(c) Necessity of op}: Assume op is absent. The op field specifies which stateless function should be applied to the data. Without this instruction, a receiving node, being stateless per S2, has no information on how to process the data. The computation would be undefined, violating fundamental correctness.
        
        \item \textbf{(d) Necessity of pc/next}: Assume pc/next is absent. This field encodes the control flow. The simulation of SK-calculus (Lemma \ref{lemma:one-step}, Theorem \ref{thm:turing-complete}) relies on the ability to dynamically rewire the dataflow graph to simulate function applications. Without pc/next, the system could only execute static, simple dataflow graphs, violating the Turing completeness property.
        
        \item \textbf{(e) Necessity of rid}: Assume rid is absent. The rid is essential for idempotency. Without it, the system cannot distinguish a new increment from a resent one, violating Axiom \ref{ax:merge} and the convergence proof of Theorem \ref{thm:consistency}. As shown in Proposition \ref{prop:stateless-idempotent}, a non-idempotent system must resort to high-cost coordination or persistence mechanisms, which violates the trilemma goals.
    \end{itemize}
    
    \textbf{Conclusion}: Since the absence of any single field leads to the failure of at least one core property, the set is minimal. 
\end{proof}

\end{appendix}

\end{document}